\documentclass[11pt]{article}
\usepackage{amsfonts}
\usepackage{amsthm}
\usepackage{amsmath}
\usepackage{amssymb}
\usepackage{algorithm}
\usepackage{algorithmic}
\usepackage{mathtools}
\usepackage{hyperref}
\usepackage{color}

\newcommand{\FF}{\mathbb{F}}

\newcommand{\1}{\mathbf{1}}
\DeclareMathOperator{\Hull}{Hull}
\DeclareMathOperator{\rank}{rank}

\DeclareMathOperator{\disc}{disc}

\theoremstyle{plain}
\newtheorem{theorem}{Theorem}[section]
\newtheorem{lemma}[theorem]{Lemma}
\newtheorem{corollary}[theorem]{Corollary}

\theoremstyle{definition}
\newtheorem{definition}[theorem]{Definition}
\newtheorem{example}[theorem]{Example}
\theoremstyle{remark}
\newtheorem{remark}[theorem]{Remark}

\title{The Closure of LCD-to-GI Reductions via Generalized Inner Products}

\author{Keita~Ishizuka
\thanks{K. Ishizuka is with the Information Technology R\&D Center, Mitsubishi Electric Corporation, Kanagawa, Japan (e-mail: keitaishizuka1994@gmail.com).}}
\date{}

\begin{document}

\maketitle

\begin{abstract}
The Permutation Equivalence Problem (PEP) for linear codes is a fundamental problem in coding theory and cryptography. A recent reduction shows that PEP for Linear Complementary Dual (LCD) codes reduces to Graph Isomorphism~(GI) via orthogonal projectors, but is restricted to codes with trivial hull. We prove that this approach extends to bilinear forms $M = aI + bJ$, and that no other nondegenerate symmetric form yields a valid reduction. A code is reducible if and only if its hull dimension is at most~$1$ with an explicit condition on the hull vector; in characteristic~$2$, only LCD codes are reducible. This establishes the closure of the orthogonal projector method. We derive exact enumeration formulas via character sums over quadratic forms and provide a polynomial-time reduction algorithm.
\end{abstract}

\medskip
\noindent\textbf{Keywords:} Permutation Equivalence Problem, Graph Isomorphism, LCD codes, Enumeration

\section{Introduction}

The Permutation Equivalence Problem (PEP) for linear codes asks whether two given codes are identical up to a coordinate permutation.
In coding theory, PEP underlies the classification of linear codes, a long-standing problem for which Sendrier's Support Splitting Algorithm~\cite{sendrier2000finding} remains the standard computational tool. In cryptography, the conjectured hardness of PEP and its close variants underlies a growing family of code-based signature schemes. The NIST PQC candidate LESS~\cite{LESS} originally relied on PEP and now uses the closely related Linear Equivalence Problem (LEP); the recent SPECK~\cite{SPECK} is based on the Permutation Equivalence of Codes and Kernels (PECK) problem, which combines features of PEP and the Permuted Kernel Problem (PKP) and reduces to PEP in the high-$q$ regime, where the scheme draws hard instances from codes with large hull. Understanding the structural and computational boundaries of PEP therefore has direct implications for both code classification and post-quantum cryptography.

The complexity of PEP is closely tied to that of Graph Isomorphism (GI). Petrank and Roth~\cite{petrank1997finding} showed that PEP is at least as hard as GI; conversely, Bardet et al.~\cite{bardet2019permutation} proved that PEP for Linear Complementary Dual (LCD) codes (those with trivial hull $\Hull(C) = C \cap C^\perp = \{0\}$) reduces to GI in polynomial time. Their construction maps each LCD code to a weighted graph via the orthogonal projector $\Pi_C = G^T(GG^T)^{-1}G$, with the property that two LCD codes are permutation-equivalent if and only if their projectors are isomorphic as weighted graphs. Combined with Babai's quasi-polynomial-time algorithm for GI~\cite{babai2016}, this yields a sub-exponential algorithm for PEP on LCD instances. However, the construction requires $GG^T$ to be nonsingular, ruling out codes with nontrivial hull. The natural question is: \emph{how far can the orthogonal projector method be pushed?}

\paragraph{Our contribution.}
We answer this question completely. First, we prove that the orthogonal projector $\Pi_{C,M} = MG^T(GMG^T)^{-1}G$ gives a valid reduction ($C_1 \cong C_2 \Leftrightarrow \Pi_{C_1,M} \cong \Pi_{C_2,M}$) if and only if $M = aI + bJ$ (Theorem~\ref{thm:main_characterization}). Second, we characterize exactly which codes are reducible: a code admits such a reduction if and only if its hull dimension is at most~$1$ with $(x,\1) \neq 0$ for the hull vector~$x$ (Theorem~\ref{thm:gi_reducible_count}); in characteristic~$2$, only LCD codes are reducible (Corollary~\ref{cor:char2}). Together, these results establish the closure of the orthogonal projector approach: codes with hull dimension $\geq 2$ are provably beyond its reach, regardless of the choice of nondegenerate symmetric bilinear form. We also derive exact enumeration formulas using Weil's theorem on quadratic forms (Theorem~\ref{thm:enumeration_main}) and provide a polynomial-time reduction algorithm (Section~\ref{sec:algorithm}).

\paragraph{Relation to prior work.}
A preliminary 3-page poster by the author~\cite{ishizuka2025ccsposter} introduced the family of permutation-invariant bilinear forms with structure matrix $aI + bJ$ and established the dimension bound $|\dim \Hull_M(C) - \dim \Hull(C)| \leq 1$. That work used these tools to accelerate the Support Splitting Algorithm~\cite{sendrier2000finding}. The present paper shares only this characterization and dimension bound with the poster (Lemma~\ref{lem:centralizer} and Theorem~\ref{thm:dimension_bound}), reproduced here for self-containment. The main results of this paper address the distinct question of GI-reducibility and are not present in~\cite{ishizuka2025ccsposter}.

\paragraph{Organization.}
Section~\ref{sec:preliminaries} reviews definitions and basic properties.
Section~\ref{sec:extension} introduces permutation-invariant bilinear forms and characterizes valid structure matrices.
Section~\ref{sec:characterization} characterizes GI-reducible codes and proves the closure result.
Section~\ref{sec:enumeration} derives enumeration formulas.
Section~\ref{sec:algorithm} presents the reduction algorithm.
Section~\ref{sec:conclusion} concludes with open questions.
Appendix~\ref{app:lcd_enumeration} collects standard LCD code enumeration formulas (Carlet et al.), and Appendix~\ref{app:type_epsilon} extends them to the type-$\epsilon$ setting needed in Section~\ref{sec:enumeration}.

\section{Preliminaries}
\label{sec:preliminaries}

Let $S_n$ denote the symmetric group on $n$ elements. For $\pi \in S_n$ and a vector $x = (x_1, \ldots, x_n) \in \FF_q^n$, we define $\pi(x) = (x_{\pi^{-1}(1)}, \ldots, x_{\pi^{-1}(n)})$. The corresponding permutation matrix $P_\pi \in \FF_q^{n \times n}$ is defined by $(P_\pi)_{ij} = 1$ if $j = \pi(i)$ and $0$ otherwise, so that $\pi(x) = xP_\pi$ for row vectors.

\subsection{Graphs}

In this paper, we consider directed graphs with self-loops and edge weights in $\FF_q$. A directed graph on $n$ vertices is represented by a matrix $A \in \FF_q^{n \times n}$ where $A_{ij}$ denotes the weight of the directed edge from vertex $i$ to vertex $j$. Self-loops are allowed ($A_{ii}$ may be nonzero). An \emph{undirected graph} is the special case where the adjacency matrix $A$ is symmetric.

Two graphs with adjacency matrices $A_1, A_2 \in \FF_q^{n \times n}$ are \emph{isomorphic} if there exists a permutation $\pi \in S_n$ such that $A_2 = P_\pi^T A_1 P_\pi$, where $P_\pi$ is the permutation matrix corresponding to $\pi$.

\begin{definition}[Graph Isomorphism Problem]
Given adjacency matrices $A_1$ and $A_2$ of two graphs, decide whether there exists a permutation $\pi \in S_n$ such that $A_2 = P_\pi^T A_1 P_\pi$.
\end{definition}

It is well known that directed graph isomorphism reduces to undirected graph isomorphism in polynomial time (by a standard gadget construction), and that edge weights can be eliminated in polynomial time by replacing each weighted edge with a gadget encoding the weight. Babai~\cite{babai2016} showed that GI can be solved in quasipolynomial time $2^{O(\log^c n)}$ for some constant $c$. Combined with the above reductions, directed graph isomorphism can also be solved in quasipolynomial time.

\subsection{Linear Codes}
Let $\FF_q$ denote the finite field of order $q$, where $q = p^m$ for some prime $p$ and positive integer $m$. The characteristic of $\FF_q$ is $p$. We denote by $\FF_q^n$ the $n$-dimensional vector space over $\FF_q$.
An $[n,k]_q$ code is a $k$-dimensional subspace of $\FF_q^n$.
The values $n$ and $k$ are called length and dimension, respectively.
A generator matrix of a code $C$ is a $k \times n$ matrix $G$ whose rows form a basis of $C$.
For vectors $v_1, \ldots, v_m \in \FF_q^n$, we denote by $\langle v_1, \ldots, v_m \rangle$ the subspace spanned by these vectors.
Two codes $C_1$ and $C_2$ of length $n$ over $\FF_q$ are \emph{permutation-equivalent} if there exists $\pi \in S_n$ such that $C_2 = C_1 P_\pi$, where $P_\pi$ is the permutation matrix corresponding to $\pi$.

\begin{definition}[Permutation Equivalence Problem]
Given two $[n,k]_q$ codes $C_1$ and $C_2$, decide whether there exists a permutation $\pi \in S_n$ such that $C_2 = C_1 P_\pi$.
\end{definition}

The standard inner product on $\FF_q^n$ is defined as $\langle x, y \rangle = \sum_{i=1}^n x_i y_i$ for $x = (x_1, \ldots, x_n)$ and $y = (y_1, \ldots, y_n)$ in $\FF_q^n$.
The \emph{dual code} of $C$ with respect to the standard inner product is $C^\perp = \{y \in \FF_q^n : \langle x, y \rangle = 0 \text{ for all } x \in C\}$.
If $C$ is an $[n,k]_q$ code, then $C^\perp$ is an $[n,n-k]_q$ code. Moreover, $(C^\perp)^\perp = C$.
The \emph{hull} of a code $C$ is defined as $\Hull(C) = C \cap C^\perp$.
A code $C$ is called linear complementary dual (LCD) code if $\Hull(C) = \{0\}$.
A matrix $\Pi \in \FF_q^{n \times n}$ is an \emph{orthogonal projector} onto a subspace $C \subseteq \FF_q^n$ (with respect to the standard inner product) if $\Pi^2 = \Pi$, $\text{Im}(\Pi) = C$, and for all $x \in \text{Im}(\Pi)$ and $y \in \ker(\Pi)$, we have $\langle x, y \rangle = 0$.
\begin{theorem}[Massey \cite{massey1992linear}]
\label{thm:massey}
Let $C$ be an $[n,k]$ linear code with generator matrix $G$. Then $C$ is LCD if and only if $GG^T$ is nonsingular. Moreover, if $C$ is LCD, then
$$\Pi_C = G^T(GG^T)^{-1}G$$
is the orthogonal projector from $\FF_q^n$ onto $C$.
\end{theorem}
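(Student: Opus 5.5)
The plan has two parts: the equivalence between the LCD property and nonsingularity of $GG^T$, and then the projector property of $\Pi_C$. For the equivalence, I identify the left kernel of $GG^T$ with the hull. Take $u \in \FF_q^k$ and set $x = uG \in C$. Then $uGG^T = 0$ holds exactly when $\langle x, g_i\rangle = 0$ for every row $g_i$ of $G$. Since the rows span $C$, this is the same as $x \in C^\perp$, i.e. $x \in \Hull(C)$. Because $G$ has full row rank, the map $u \mapsto uG$ is injective. Hence it restricts to a linear bijection between $\{u : uGG^T = 0\}$ and $\Hull(C)$, giving $\dim \ker_{\mathrm{left}}(GG^T) = \dim \Hull(C)$. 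So $GG^T$, a $k\times k$ matrix, is nonsingular exactly when $\Hull(C) = \{0\}$, which is the LCD condition.

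For the second part, assume $C$ is LCD and put $\Pi_C = G^T(GG^T)^{-1}G$. Throughout I use the row-vector convention of the paper, so a vector $x$ is mapped to $x\Pi_C$; this is harmless because $\Pi_C$ is symmetric. The checks are as follows.

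\begin{itemize}
\item \emph{Idempotence.} Expanding $\Pi_C^2 = G^T(GG^T)^{-1}(GG^T)(GG^T)^{-1}G$ gives $\Pi_C$.
\item \emph{Image.} Every $x\Pi_C = \bigl(xG^T(GG^T)^{-1}\bigr)G$ lies in the row space $C$. Conversely, for $uG \in C$ we have $uG\,\Pi_C = u(GG^T)(GG^T)^{-1}G = uG$. So $\Pi_C$ fixes $C$ pointwise, and its image is exactly $C$.
\item \emph{Kernel.} If $y\Pi_C = 0$, then multiplying on the right by $G^T$ gives $yG^T(GG^T)^{-1}GG^T = yG^T = 0$, so $y \in C^\perp$. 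Conversely, $y \in C^\perp$ means $yG^T = 0$, hence $y\Pi_C = 0$. Therefore $\ker \Pi_C = C^\perp$.
\end{itemize}

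Orthogonality of image and kernel is then immediate: any $x \in \mathrm{Im}(\Pi_C) = C$ and $y \in \ker\Pi_C = C^\perp$ satisfy $\langle x, y\rangle = 0$.

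The argument is routine, and the only real obstacle is keeping the conventions consistent. One point is row versus column action, which the symmetry of $\Pi_C$ removes. The other is that over $\FF_q$ the "orthogonal projector" is defined purely algebraically. Positivity plays no role; what is needed is $\FF_q^n = C \oplus C^\perp$, and this decomposition follows from the LCD hypothesis together with $\dim C^\perp = n-k$.
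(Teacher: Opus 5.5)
Your proof is correct and matches the paper's argument. The paper only cites Massey for this statement, but its proof of the Generalized Massey Theorem (Theorem~\ref{thm:generalized_massey}, with Lemma~\ref{lem:hull_dimension}) at $M=I$ is your argument: the bijection $u\mapsto uG$ between the left kernel of $GG^T$ and $\Hull(C)$, followed by direct checks of idempotence, image, and orthogonality. The only cosmetic difference is that you prove $\ker\Pi_C = C^\perp$ in both directions, whereas the paper only uses that $w\Pi_C=0$ forces $wG^T=0$.
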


\begin{theorem}[GI reduction, Bardet et al.\ \cite{bardet2019permutation}]
\label{thm:gi_reduction}
Let $C_1$ and $C_2$ be two $[n,k]$ linear codes with $\Hull(C_i) = \{0\}$ (i.e., LCD codes) for $i=1,2$. Then the permutation equivalence problem for $C_1$ and $C_2$ can be reduced in polynomial time to the graph isomorphism problem.
\end{theorem}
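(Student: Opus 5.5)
The plan is to map each LCD code $C$ to its orthogonal projector $\Pi_C = G^T(GG^T)^{-1}G$ from Theorem~\ref{thm:massey}, viewed as the adjacency matrix of a weighted directed graph (in fact undirected, since $\Pi_C$ is symmetric). We then show that $C_1$ and $C_2$ are permutation-equivalent if and only if $\Pi_{C_1}$ and $\Pi_{C_2}$ are isomorphic as weighted graphs. Computing $\Pi_C$ needs only Gaussian elimination over $\FF_q$, so it takes polynomial time. The weighted-graph instance can then be turned into an ordinary GI instance by the standard gadget reductions recalled in the Graphs subsection.

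The first step is that $\Pi_C$ is determined by $C$ alone, independent of the generator matrix. Any other generator matrix has the form $G' = AG$ with $A \in GL_k(\FF_q)$. Then
\[G'^T(G'G'^T)^{-1}G' = G^TA^T A^{-T}(GG^T)^{-1}A^{-1}AG = \Pi_C.\]
Alternatively, uniqueness follows directly from the definition: an orthogonal projector onto $C$ has image $C$ and kernel $C^\perp$, and $\FF_q^n = C \oplus C^\perp$ because $C$ is LCD.

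The second step is equivariance under permutations. If $C_2 = C_1P_\pi$, take $G_2 = G_1P_\pi$. Since $P_\pi P_\pi^T = I$, we get $G_2G_2^T = G_1G_1^T$, and therefore
\[\Pi_{C_2} = P_\pi^T G_1^T(G_1G_1^T)^{-1}G_1P_\pi = P_\pi^T\Pi_{C_1}P_\pi.\]
This is exactly graph isomorphism in the sense defined above.

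The converse is the step that needs care, though it is short. Suppose $\Pi_{C_2} = P_\pi^T\Pi_{C_1}P_\pi$. Because $\Pi_C$ is symmetric, its row space equals $\mathrm{Im}(\Pi_C) = C$ when vectors are taken as rows: rows of $\Pi_C$ are $x\Pi_C$, and $\Pi_C$ restricted to rows projects onto $C$. Hence
\[C_2 = \text{rowspace}(\Pi_{C_2}) = \text{rowspace}(P_\pi^T\Pi_{C_1}P_\pi) = \text{rowspace}(\Pi_{C_1})P_\pi = C_1P_\pi,\]
using that left multiplication by the invertible matrix $P_\pi^T$ does not change the row space. To verify the row-space claim, note that $\text{rowspace}(\Pi_C) \subseteq \text{rowspace}(G) = C$, and equality holds because $G\Pi_C = G$.

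Combining the two directions shows that the map $C \mapsto \Pi_C$ is a polynomial-time reduction of PEP on LCD codes to weighted GI, which finishes the argument.
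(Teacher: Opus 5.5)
Your proposal is correct and follows essentially the same route as the paper's sketch of the Bardet et al.\ argument: $\Pi_C$ is independent of the generator matrix, $\Pi_{CP} = P^T \Pi_C P$ follows from $PP^T = I$, the converse recovers $C$ as the image (row space) of $\Pi_C$, and a standard gadget reduces weighted GI to GI. Your explicit check that the row space of $\Pi_C$ equals $C$ via $G\Pi_C = G$ supplies a detail the paper leaves implicit.
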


It is worthwhile to review the argument, since our main result extends it from the standard inner product to arbitrary symmetric bilinear forms. For an LCD code $C$ with generator matrix $G$, the projector $\Pi_C = G^T(GG^T)^{-1}G$ is symmetric, idempotent, and a complete invariant of $C$: one has $C = \mathrm{Im}(\Pi_C)$, and $\Pi_C$ depends only on $C$ (not on the choice of $G$). For any permutation matrix $P$, direct computation gives $\Pi_{CP} = P^T \Pi_C P$. Hence two LCD codes are permutation equivalent if and only if their projectors are conjugate by a permutation matrix, i.e., isomorphic as edge-weighted graphs on $n$ vertices whose adjacency matrices are $\Pi_{C_1}$ and $\Pi_{C_2}$; a standard gadget reduces weighted GI over $\FF_q$ to (unweighted) GI in polynomial time, completing the reduction.

The crux of this argument is the identity $\Pi_{CP} = P^T \Pi_C P$, which relies on $PIP^T = I$ (since $P$ is orthogonal). Section~\ref{sec:extension} replaces $I$ by a general symmetric structure matrix $M$ and asks: for which $M$ does the analogous identity $\Pi_{CP, M} = P^T \Pi_{C, M} P$ hold, so that the reduction goes through?

\subsection{Quadratic and Bilinear Forms}

We summarize the structural facts on quadratic and symmetric bilinear forms used throughout the paper; the standard references are~\cite{lidl1997finitefields} and~\cite{taylor1992geometry}. Throughout this subsection $q$ is odd unless stated otherwise.

A \emph{bilinear form} on $\FF_q^n$ is a map $B: \FF_q^n \times \FF_q^n \rightarrow \FF_q$ that is linear in each argument. Every bilinear form $B$ can be represented by a matrix $M \in \FF_q^{n \times n}$ such that $B(x, y) = xMy^T$ for all $x, y \in \FF_q^n$, where we view $x$ and $y$ as row vectors. We call $M$ the \emph{structure matrix} of $B$. A bilinear form is \emph{nondegenerate} if $\det(M) \neq 0$. Throughout this paper, we restrict to \emph{symmetric} bilinear forms (i.e., $M = M^T$): a non-symmetric structure matrix would necessitate distinguishing left and right dual codes, which falls outside our framework. We identify a bilinear form with its structure matrix with respect to the standard basis, and use the matrix notation $M$ directly instead of the function notation $B$.
The \emph{radical} of a symmetric bilinear form $B$ on a subspace $W \subseteq \FF_q^n$ is
\[
\mathrm{rad}(W) := \{v \in W : B(v, w) = 0 \text{ for all } w \in W\}.
\]
$B|_W$ is \emph{nondegenerate} if $\mathrm{rad}(W) = \{0\}$; equivalently, the Gram matrix of any basis of~$W$ is nonsingular. A vector $x$ is \emph{isotropic} if  $B(x,x) = 0$.
The \emph{discriminant} of a nondegenerate $B$ on $\FF_q^n$ is the class $\disc(B) := \det(M) \cdot (\FF_q^*)^2 \in \FF_q^*/(\FF_q^*)^2$. This is invariant under change of basis $M \mapsto QMQ^T$ ($Q \in \mathrm{GL}_n$, $\det(QMQ^T) = \det(Q)^2 \det(M)$). Two nondegenerate symmetric bilinear forms on $\FF_q^n$ ($q$ odd) are \emph{isometric} (related by such a change of basis) if and only if they have the same discriminant.
A two-dimensional nondegenerate $B$ with isotropic basis $\{u, v\}$ satisfying $B(u,v) = 1$ is called a \emph{hyperbolic plane}; its Gram matrix is $\begin{psmallmatrix} 0 & 1 \\ 1 & 0 \end{psmallmatrix}$, so $\disc = -1$. A subspace on which $B$ vanishes only at the origin is called \emph{anisotropic}.

The following three structural results from the theory of bilinear forms over $\FF_q$ are used repeatedly; all are standard~\cite{taylor1992geometry}.

\begin{lemma}[Hyperbolic completion]\label{lem:hyperbolic_completion}
Let $q$ be odd and let $B$ be a nondegenerate symmetric bilinear form on $V = \FF_q^n$. For every nonzero isotropic vector $u \in V$, there exists $v \in V$ such that $\{u, v\}$ spans a hyperbolic plane.
\end{lemma}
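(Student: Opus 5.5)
The plan is to use nondegeneracy to find a vector $w$ with $B(u,w)\neq 0$, and then correct $w$ by a multiple of $u$ so that it becomes isotropic.

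First, since $B$ is nondegenerate on $V$ and $u\neq 0$, the linear functional $B(u,\cdot)$ is not identically zero. Indeed, if $B(u,y)=0$ for all $y$, then $uM=0$ with $\det M\neq 0$, which forces $u=0$. So there is some $w\in V$ with $B(u,w)\neq 0$, and after rescaling $w$ we may assume $B(u,w)=1$. Because $B(u,u)=0$ while $B(u,w)\neq 0$, the vector $w$ is not a multiple of $u$, so $u$ and $w$ are linearly independent.

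Next, look for $v=w+\lambda u$ with $\lambda\in\FF_q$. Using symmetry, bilinearity and the isotropy of $u$,
\[
B(v,v)=B(w,w)+2\lambda B(u,w)+\lambda^2B(u,u)=B(w,w)+2\lambda .
\]
Since $q$ is odd, $2$ is invertible, and the choice $\lambda=-B(w,w)/2$ gives $B(v,v)=0$. We also have $B(u,v)=B(u,w)+\lambda B(u,u)=1$.

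The span $\langle u,v\rangle=\langle u,w\rangle$ is two-dimensional, and its Gram matrix with respect to $\{u,v\}$ is $\begin{psmallmatrix}0&1\\1&0\end{psmallmatrix}$. This matrix has determinant $-1\neq 0$, so $B$ restricted to this span is nondegenerate and $\{u,v\}$ spans a hyperbolic plane.

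The only step that needs care is the correction that makes $v$ isotropic, since it requires dividing by $2$. This is exactly where the hypothesis that $q$ is odd is used.
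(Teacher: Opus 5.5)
Your proof is correct: nondegeneracy gives $w$ with $B(u,w)=1$, and replacing $w$ by $w - \tfrac{1}{2}B(w,w)\,u$ yields the isotropic partner $v$ with Gram matrix $\begin{psmallmatrix}0&1\\1&0\end{psmallmatrix}$, and the oddness of $q$ enters exactly in the division by $2$. The paper gives no proof of this lemma and simply cites it as standard (Taylor); your argument is the usual textbook proof.
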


\begin{lemma}[Existence of an orthogonal basis]\label{lem:orth_basis}
Let $q$ be odd. Every nondegenerate symmetric bilinear form $B$ on $\FF_q^n$ admits an orthogonal basis: the structure matrix $M$ is congruent to $\mathrm{diag}(d_1, \ldots, d_n)$ for some $d_1, \ldots, d_n \in \FF_q^*$.
\end{lemma}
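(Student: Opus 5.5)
We prove Lemma~\ref{lem:orth_basis} by induction on $n$, splitting off one anisotropic vector at a time and passing to its orthogonal complement. The case $n=0$ is vacuous. For $n=1$ the form is $x\mapsto d_1 x^2$ with $d_1=\det(M)\neq 0$, so there is nothing to prove. For $n\ge 2$, the first step is to find a vector $e_1\in V$ with $B(e_1,e_1)\neq 0$. Since $B$ is nondegenerate, $B$ is not identically zero, so there are $x,y$ with $B(x,y)\neq 0$. If $B(x,x)\neq 0$ or $B(y,y)\neq 0$, take that vector. Otherwise expand
\[
B(x+y,x+y)=B(x,x)+2B(x,y)+B(y,y)=2B(x,y),
\]
which is nonzero because $q$ is odd, so $2\in\FF_q^*$. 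This is the only place the hypothesis on $q$ enters. It is genuinely needed: in characteristic~$2$ the hyperbolic plane $\begin{psmallmatrix}0&1\\1&0\end{psmallmatrix}$ has $B(z,z)=0$ for every $z$, so no orthogonal basis exists.

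Set $d_1=B(e_1,e_1)$ and $W=e_1^\perp=\{w: B(e_1,w)=0\}$. The map $v\mapsto B(e_1,v)$ is a nonzero linear functional, so $\dim W=n-1$. Moreover $e_1\notin W$ because $d_1\neq 0$, hence $V=\langle e_1\rangle\oplus W$. Concretely, each $v$ decomposes as
\[
v=\frac{B(e_1,v)}{d_1}\,e_1+\Bigl(v-\frac{B(e_1,v)}{d_1}\,e_1\Bigr),
\]
with the second summand in $W$.

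Next we check that $B|_W$ is nondegenerate. Suppose $w\in\mathrm{rad}(W)$. Then $B(w,W)=0$, and also $B(w,e_1)=0$ by the definition of $W$. Since $V=\langle e_1\rangle\oplus W$, this gives $B(w,V)=0$, and nondegeneracy of $B$ on $V$ forces $w=0$. By the induction hypothesis, $W$ then has a $B$-orthogonal basis $e_2,\dots,e_n$ with $B(e_i,e_i)=d_i\neq 0$. Together with $e_1$, this yields an orthogonal basis of $V$.

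Finally, we translate back to matrices. Let $Q\in\mathrm{GL}_n$ be the matrix whose rows are $e_1,\dots,e_n$. Then $(QMQ^T)_{ij}=B(e_i,e_j)$, so
\[
QMQ^T=\mathrm{diag}(d_1,\dots,d_n),
\]
which is the claimed congruence. Each $d_i$ is nonzero, since $\prod d_i=\det(Q)^2\det(M)\neq 0$.

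The only substantive step is producing the non-isotropic vector $e_1$, and this is exactly where $q$ odd is required. The rest is the standard orthogonal-complement splitting.
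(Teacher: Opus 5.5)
Your proof is correct: you produce a non-isotropic vector via $B(x+y,x+y)=2B(x,y)$, which is exactly where $q$ odd is needed, and then induct on the nondegenerate complement $e_1^{\perp}$. The paper gives no proof of this lemma and simply cites it as standard; your argument is the usual textbook one.
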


\begin{theorem}[Witt decomposition]\label{thm:witt_decomp}
Let $q$ be odd and let $B$ be a nondegenerate symmetric bilinear form on $V = \FF_q^n$. Then $V$ decomposes as
\[
V = H \perp W,
\]
where $H$ is an orthogonal sum of hyperbolic planes (the \emph{hyperbolic part}) and $W$ is anisotropic with $\dim W \leq 2$. The decomposition is unique up to isometry.
\end{theorem}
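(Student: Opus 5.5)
The statement is the Witt decomposition (Theorem~\ref{thm:witt_decomp}). I would prove it by induction on $n = \dim V$, splitting off one hyperbolic plane at a time with Lemma~\ref{lem:hyperbolic_completion} until the remaining space is anisotropic. Uniqueness up to isometry would then come from Witt cancellation, derived from the discriminant classification stated above.

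\emph{Existence.} If $V$ is anisotropic, take $H = 0$ and $W = V$. Otherwise choose a nonzero isotropic $u$. Lemma~\ref{lem:hyperbolic_completion} gives $v$ such that $P = \langle u, v\rangle$ is a hyperbolic plane. Since $B|_P$ is nondegenerate, $V = P \perp P^\perp$, and $B|_{P^\perp}$ is nondegenerate: a vector of $P^\perp$ orthogonal to all of $P^\perp$ is orthogonal to all of $V$, so it is $0$. Apply induction to $P^\perp$ and collect the planes into $H$.

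\emph{The bound $\dim W \le 2$.} This is the step that genuinely uses finiteness of the field, and I expect it to be the main obstacle. Suppose $\dim W \ge 3$. By Lemma~\ref{lem:orth_basis}, $W$ contains a nondegenerate $3$-dimensional subspace with orthogonal basis and diagonal entries $d_1, d_2, d_3 \in \FF_q^*$. The sets $\{d_1 x^2 : x \in \FF_q\}$ and $\{-d_3 - d_2 y^2 : y \in \FF_q\}$ each have $(q+1)/2$ elements, so they intersect. Hence $d_1x^2 + d_2y^2 + d_3 = 0$ has a solution, and $(x, y, 1)$ is a nonzero isotropic vector in $W$. This contradicts anisotropy.

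\emph{Uniqueness.} Suppose $V = H \perp W \cong H' \perp W'$. A hyperbolic plane has discriminant $-1$, so the discriminant of $H$ is determined by its dimension. I would first show that the Witt index, i.e.\ the maximal dimension of a totally isotropic subspace, equals $\dim H/2$. Any totally isotropic subspace $U$ meets $W$ trivially, because $W$ is anisotropic. Comparing $U$ with a maximal isotropic subspace of $H$ via projection gives $\dim U \le \dim H/2$. Since the Witt index is intrinsic to $V$, this yields $\dim H = \dim H'$, hence $H \cong H'$ and $\dim W = \dim W'$. Then $\disc(W) = \disc(V)\,\disc(H)^{-1} = \disc(W')$. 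Since two nondegenerate forms of the same dimension with equal discriminant are isometric, $W \cong W'$, which completes the uniqueness claim. If the projection argument for the Witt index becomes messy, I would instead prove Witt cancellation directly: $\langle a\rangle \perp X \cong \langle a\rangle \perp Y$ implies $X \cong Y$, by reflecting one anisotropic vector onto another. Cancellation follows immediately from the discriminant classification in odd characteristic, so this fallback is cheap.
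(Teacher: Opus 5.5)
The paper gives no proof of this statement; it is quoted as a standard fact from~\cite{taylor1992geometry}, so your argument has to stand on its own, and for the most part it does. Existence is the usual splitting-off of hyperbolic planes. Your pigeonhole count ($\tfrac{q+1}{2}+\tfrac{q+1}{2}>q$) correctly shows that every nondegenerate form of dimension $\ge 3$ is isotropic. Lemma~\ref{lem:orth_basis} applies to $W$ because an anisotropic subspace is automatically nondegenerate.

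The step that does not go through as written is the Witt-index bound $\dim U \le \dim H/2$. The projection $U \to H$ along $W$ is injective, since $U \cap W = 0$. But its image need not be totally isotropic: for $h+w,\ h'+w' \in U$ you only get $B(h,h') = -B(w,w')$. So you cannot quote the bound for totally isotropic subspaces of $H$, and the phrase ``comparing via projection'' proves nothing yet.

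A correct argument is induction on the number of hyperbolic planes. Write $V = \langle u,v\rangle \perp V_1$ with $V_1 = H_1 \perp W$. The map $U \cap u^\perp \to V_1$ that discards the $u$-component has totally isotropic image and kernel $U \cap \langle u\rangle$. If $u \in U$, then $U \subseteq u^\perp$; if $u \notin U$, passing to $U \cap u^\perp$ costs at most one dimension. Either way $\dim U \le 1 + (\dim H/2 - 1)$.

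Your cancellation fallback also works, but you left out the step that actually uses it. If $\dim H < \dim H'$, write $H' \cong H \perp H''$ with $H'' \ne 0$ hyperbolic. Cancelling $H$ gives $W \cong H'' \perp W'$, which contains a nonzero isotropic vector, contradicting anisotropy of $W$.

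Simplest of all, once $\dim W \le 2$ is known you can avoid both routes. Since $\dim W \equiv n \pmod 2$, the only ambiguity is $\dim W \in \{0,2\}$ for $n$ even. A binary form $d_1x^2+d_2y^2$ is anisotropic if and only if $-d_1d_2$ is a non-square. It follows that $(-1)^{n/2}\disc(V)$ is a square exactly when $\dim W = 0$. Thus $\disc(V)$ determines $\dim H$, hence $\disc(W) = \disc(V)\,\disc(H)^{-1}$, hence $W$ up to isometry.
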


\begin{theorem}[Witt's extension theorem]\label{thm:witt_extension}
Let $q$ be odd and let $B$ be a nondegenerate symmetric bilinear form on $V = \FF_q^n$. Any isometry $\phi: W_1 \to W_2$ between subspaces $W_1, W_2 \subseteq V$ extends to an isometry $\Phi: V \to V$ lying in the orthogonal group $\mathbb{O}(V, B)$.
\end{theorem}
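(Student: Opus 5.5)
The statement to be proved is Witt's extension theorem (Theorem~\ref{thm:witt_extension}). I will argue by induction on $\dim W_1$, following the classical reflection-based proof. Since $q$ is odd, $B(x,y)=\tfrac12\bigl(Q(x+y)-Q(x)-Q(y)\bigr)$ with $Q(x)=B(x,x)$. For an anisotropic vector $w$, the reflection
\[
\tau_w(x)=x-\frac{2B(x,w)}{B(w,w)}\,w
\]
lies in $\mathbb{O}(V,B)$.

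\textbf{Degenerate case.} First I reduce to $W_1$ nondegenerate. If $\mathrm{rad}(W_1)\neq\{0\}$, pick a complement $W_1=R\oplus U$ with $R=\mathrm{rad}(W_1)$, and let $u_1,\dots,u_r$ be a basis of $R$. Repeated use of Lemma~\ref{lem:hyperbolic_completion}, working inside $U^\perp$ (which is nondegenerate when $U$ is), produces vectors $v_i$ such that the $u_i,v_i$ form orthogonal hyperbolic planes orthogonal to $U$. This enlarges $W_1$ to a nondegenerate space $\overline{W_1}=U\perp\bigoplus_i\langle u_i,v_i\rangle$. Doing the same for $W_2$, with $\phi(u_i)$ and $\phi(U)$, gives $\overline{W_2}$. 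The map $\phi$ then extends to an isometry $\overline{W_1}\to\overline{W_2}$ by sending $v_i$ to the corresponding new vectors.

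\textbf{Nondegenerate case.} Now assume $W_1$ is nondegenerate. By Lemma~\ref{lem:orth_basis} it has an orthogonal basis, so it is an orthogonal sum of anisotropic lines. The key step is the one-dimensional case: given $x,y$ with $Q(x)=Q(y)\neq0$, there is an element of $\mathbb{O}(V,B)$ sending $x$ to $y$. Since
\[
Q(x+y)+Q(x-y)=4Q(x)\neq0,
\]
at least one of $x-y$ and $x+y$ is anisotropic. If $x-y$ is anisotropic, then $\tau_{x-y}(x)=y$. Otherwise $x+y$ is anisotropic, and $\tau_y\tau_{x+y}(x)=y$.

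\textbf{Induction.} Write $W_1=\langle x\rangle\perp W_1'$. Apply the one-dimensional case to obtain $\sigma\in\mathbb{O}(V,B)$ with $\sigma(x)=\phi(x)$. Replacing $\phi$ by $\sigma^{-1}\circ\phi$, I may assume $\phi(x)=x$. Then $\phi$ maps $W_1'$ into $x^\perp$, which is nondegenerate. By induction inside $x^\perp$, $\phi|_{W_1'}$ extends to an isometry of $x^\perp$. Extending this isometry by $x\mapsto x$ gives an isometry of $V$ extending $\phi$.

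\textbf{Expected obstacle.} The main difficulty is the degenerate reduction: the hyperbolic partners $v_i$ must be chosen simultaneously orthogonal to $U$, to each other, and to the earlier planes. I will do this inductively, applying Lemma~\ref{lem:hyperbolic_completion} inside the nondegenerate space $(U\perp\langle u_1,v_1,\dots,u_{i-1},v_{i-1}\rangle)^\perp$. Within that space I then adjust the chosen $v_i$ to be orthogonal to $u_j$ for $j\neq i$, which is possible by linear algebra since the $u_j$ are linearly independent.
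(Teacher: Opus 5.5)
The paper gives no proof of this theorem. It is quoted as standard with a pointer to Taylor's book, so your argument supplies a proof where the paper has only a citation. Your argument is the classical reflection proof, and it is correct:
\begin{itemize}
\item the identity $Q(x+y)+Q(x-y)=4Q(x)\neq 0$ guarantees that $\tau_{x-y}$ or $\tau_y\tau_{x+y}$ sends $x$ to $y$;
\item the induction inside $x^\perp$ is legitimate, because $x^\perp$ is nondegenerate and $W_1'$ is again a nondegenerate subspace of smaller dimension;
\item adjoining hyperbolic partners to a basis of $\mathrm{rad}(W_1)$ reduces the general case to the nondegenerate one. With $U$ nondegenerate, $\mathrm{rad}(W_1)\subseteq U^\perp$, and $B(v_i,v_j)=0$ for $i\neq j$, the Gram matrices of $\overline{W_1}$ and $\overline{W_2}$ coincide in the corresponding bases.
\end{itemize}

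One step should be made explicit. Set $S_i:=(U\perp\langle u_1,v_1,\dots,u_{i-1},v_{i-1}\rangle)^\perp$. When you modify $v_i$ inside $S_i$ to force $B(u_j,v_i)=0$ for $j>i$, you can lose $B(v_i,v_i)=0$. The repair is as follows.
\begin{itemize}
\item Add some $z\in S_i$ with $B(u_i,z)=0$ and $B(u_j,z)=-B(u_j,v_i)$ for $j>i$. This is possible because $u_i,\dots,u_r$ lie in the nondegenerate space $S_i$ and are linearly independent, so their functionals on $S_i$ are independent.
\item Then replace $v_i$ by $v_i-\tfrac12 Q(v_i)\,u_i$. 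This restores isotropy and keeps $B(u_i,v_i)=1$ and all orthogonality relations, since $\mathrm{rad}(W_1)$ is totally isotropic.
\end{itemize}

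Compared with the bare citation, your route is self-contained apart from the hyperbolic-completion and orthogonal-basis lemmas, which the paper also states. As a by-product it shows that the extension can be taken to be a product of reflections. The only place the paper uses this theorem is Lemma~\ref{lem:witt_transitivity}, which applies it to $M^\epsilon$-LCD codes, where $W_1$ is nondegenerate. So for the paper's purposes your nondegenerate case alone suffices, and the more delicate degenerate reduction is not needed there.
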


A \emph{quadratic form} $Q(x) = \sum_{i,j} Q_{ij} x_i x_j$ over $\FF_q^n$ is a homogeneous polynomial of degree~$2$. Its \emph{associated (symmetric) bilinear form} is $B_Q(x,y) = Q(x+y) - Q(x) - Q(y)$, satisfying $B_Q(x,x) = 2Q(x)$. For $q$ odd, polarization gives a bijection between quadratic forms and symmetric bilinear forms via $Q \mapsto B_Q$ and $B \mapsto Q(x) := \tfrac{1}{2}B(x,x)$. We henceforth identify a symmetric bilinear form with its symmetric matrix~$M$ (so $B(x,y) = xMy^T$ for row vectors) and a quadratic form with its symmetric Gram matrix.
When $q$ is odd, the \emph{quadratic character} $\chi: \FF_q^* \to \{-1, +1\}$ is defined by $\chi(a) = 1$ if $a \in (\FF_q^*)^2$ and $\chi(a) = -1$ otherwise. We extend $\chi$ to $\FF_q$ by setting $\chi(0) = 0$. The discriminant class of $B$ is encoded by $\chi(\det M) \in \{+1, -1\}$. The following classical result counts solutions to quadratic forms.

\begin{theorem}[Weil; Lidl and Niederreiter {\cite{lidl1997finitefields}}]\label{thm:weil_lidl_niederreiter}
Let $Q(x) = \sum_{i,j} Q_{ij} x_i x_j$ be a nondegenerate quadratic form over $\mathbb{F}_q^n$ where $q$ is odd, and let $N(Q) = |\{x \in \mathbb{F}_q^n : Q(x) = 0\}|$ be the number of its solutions. Then
$$
N(Q) = \begin{cases}
    q^{n-1} & \textrm{ if $n$ is odd,} \\
    q^{n-1} + \chi((-1)^{n/2} \det(Q))(q-1)q^{(n-2)/2} & \textrm{ if $n$ is even,}
\end{cases}
$$
where $\chi$ is the quadratic character of $\mathbb{F}_q$ and $\det(Q)$ denotes the determinant of the matrix of $Q$.
\end{theorem}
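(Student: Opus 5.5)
We would prove the count by diagonalizing the form and then evaluating $N(Q)$ with additive characters and Gauss sums.

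\emph{Reduction to diagonal form.} Let $Q$ have symmetric Gram matrix $A$. By Lemma~\ref{lem:orth_basis} there is $S \in \mathrm{GL}_n(\FF_q)$ with $SAS^T = \mathrm{diag}(d_1,\ldots,d_n)$ and all $d_i \in \FF_q^*$. The substitution $x \mapsto xS$ is a bijection of $\FF_q^n$, so $N(Q)$ is unchanged. Also $\det$ changes by the square factor $\det(S)^2$, so $\chi((-1)^{n/2}\det Q)$ is unchanged. Hence we may assume $Q(x) = \sum_i d_i x_i^2$ with $\det Q = d_1\cdots d_n$.

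\emph{Character sum.} Fix a nontrivial additive character $\psi$ of $\FF_q$. Orthogonality gives $\sum_{t\in\FF_q}\psi(tc) = q$ if $c=0$ and $0$ otherwise. Therefore
\[
N(Q) = \frac{1}{q}\sum_{t\in\FF_q}\sum_{x\in\FF_q^n}\psi(tQ(x)) = q^{n-1} + \frac{1}{q}\sum_{t\neq 0}\prod_{i=1}^n G(td_i),
\]
where $G(a) = \sum_{y\in\FF_q}\psi(ay^2)$. We then need two standard Gauss sum facts for $a \ne 0$:
\begin{itemize}
\item[(i)] $G(a) = \chi(a)G(1)$. This holds because $y\mapsto y^2$ hits each nonzero square twice, so $G(a) = \sum_{u}(1+\chi(u))\psi(au) = \sum_u \chi(u)\psi(au)$; now substitute $u \mapsto a^{-1}u$.
\item[(ii)] $G(1)^2 = \chi(-1)q$. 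This follows from the usual evaluation $|G(1)|^2 = q$ together with $\overline{G(1)} = G(-1) = \chi(-1)G(1)$, the last equality by (i).
\end{itemize}
Verifying these is the only non-formal step, and it is classical.

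\emph{Conclusion.} By (i), $\prod_i G(td_i) = \chi(t)^n\chi(\det Q)G(1)^n$.
\begin{itemize}
\item If $n$ is odd, then $\chi(t)^n=\chi(t)$ and $\sum_{t\neq0}\chi(t)=0$, so $N(Q)=q^{n-1}$.
\item If $n$ is even, then $\chi(t)^n = 1$, so the sum over $t \ne 0$ contributes a factor $q-1$. By (ii), $G(1)^n = \chi(-1)^{n/2}q^{n/2}$. This yields $N(Q) = q^{n-1} + \chi((-1)^{n/2}\det Q)(q-1)q^{(n-2)/2}$.
\end{itemize}

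An alternative that avoids Gauss sums is an induction via Theorem~\ref{thm:witt_decomp}. A hyperbolic plane $uv$ has exactly $2q-1$ zeros. Moreover, $N(H\perp W)$ can be expressed through $N(W)$ and the counts of the values $Q|_W = c$. One then checks the anisotropic base cases of dimension $0$, $1$ and $2$ directly; in dimension $2$, $-\det$ is a nonsquare, so only the origin is a zero. We expect the Gauss sum route to be shorter and would use it.
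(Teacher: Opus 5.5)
Your proof is correct and complete: diagonalizing via Lemma~\ref{lem:orth_basis}, expanding $N(Q)$ with a nontrivial additive character, and using the Gauss sum facts $G(a)=\chi(a)G(1)$ and $G(1)^2=\chi(-1)q$ yields both parity cases exactly as stated. The paper gives no proof of its own and only cites Lidl and Niederreiter, and your diagonalize-then-Gauss-sum argument is essentially the standard proof in that reference.
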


\section{Permutation-Invariant Bilinear Forms}
\label{sec:extension}

The reduction of Bardet et al.\ (Theorem~\ref{thm:gi_reduction}) applies only to LCD codes, those with trivial hull $\Hull(C) = \{0\}$. For non-LCD codes, the Gram matrix $GG^T$ is singular, making the orthogonal projector undefined. To extend the reduction to codes with nonzero hull dimension, we replace the standard inner product with more general bilinear forms parametrized by matrices $M$. This yields a generalized notion of $M$-LCD codes and corresponding orthogonal projectors. The key question: which $M$ preserve the reduction structure?

\subsection{Generalized Massey Theorem}

Fix a symmetric structure matrix $M \in \FF_q^{n \times n}$. For a code $C \subseteq \FF_q^n$, we define its \emph{$M$-dual code} by
\[
C^{\perp_M} := \{y \in \FF_q^n : xMy^T = 0 \text{ for all } x \in C\},
\]
its \emph{$M$-hull} by $\Hull_M(C) := C \cap C^{\perp_M}$, and call $C$ an \emph{$M$-LCD code} if $\Hull_M(C) = \{0\}$. Setting $M = I$ recovers the standard inner-product notions of dual code, hull, and LCD code. With this generalized duality in place, Massey's theorem extends to arbitrary nondegenerate symmetric bilinear forms as follows.

\begin{lemma}
\label{lem:hull_dimension}
Let $C$ be an $[n,k]$ code with generator matrix $G$, and let $M$ be a structure matrix. Then $\dim \Hull_M(C) = k - \rank(GMG^T)$.
\end{lemma}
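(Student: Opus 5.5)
The plan is to identify $\Hull_M(C)$ with the image under $G$ of the left kernel of the Gram matrix $GMG^T$, and then apply rank--nullity. Every codeword of $C$ can be written uniquely as $x = uG$ with $u \in \FF_q^k$, because the rows of $G$ form a basis of $C$. So the map $u \mapsto uG$ is a linear isomorphism $\FF_q^k \to C$, and it suffices to describe the preimage of $\Hull_M(C)$ in $\FF_q^k$.

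First, characterize membership in $C^{\perp_M}$ for a codeword $x = uG$. By definition, $x \in C^{\perp_M}$ if and only if $wMx^T = 0$ for all $w \in C$. Since $C$ is spanned by the rows $g_1, \ldots, g_k$ of $G$, and the form is linear in its first argument, this is equivalent to $g_i M x^T = 0$ for each $i$, that is, $GMx^T = 0$. Substituting $x = uG$ gives $GMG^T u^T = 0$. Transposing and using $M = M^T$, this becomes $u\,GMG^T = 0$. Therefore
\[
\Hull_M(C) = \{uG : u \in \FF_q^k,\ u\,(GMG^T) = 0\}.
\]

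Second, the injectivity of $u \mapsto uG$ gives $\dim \Hull_M(C) = \dim \{u : u\,GMG^T = 0\}$. This is the dimension of the left null space of the $k \times k$ matrix $GMG^T$. By rank--nullity, that dimension equals $k - \rank(GMG^T)$, which proves the lemma.

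There is no real obstacle in this argument. The only points needing care are the transposition step, where the symmetry of $M$ makes $GMG^T$ symmetric so that left and right kernels coincide, and the remark that nondegeneracy of $M$ is not needed. The argument also does not depend on the choice of generator matrix: replacing $G$ by $AG$ with $A \in \mathrm{GL}_k$ replaces $GMG^T$ by $A\,GMG^T A^T$, which has the same rank.
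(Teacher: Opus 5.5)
Your proof is correct and matches the paper's argument: the paper also writes $\Hull_M(C) = \{cG : c \in \FF_q^k,\ cGMG^T = 0\}$, uses injectivity of $c \mapsto cG$, and concludes by rank--nullity. You additionally spell out the membership step that the paper states in one line, namely reducing to the rows of $G$ and transposing via $M = M^T$.
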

\begin{proof}
$\Hull_M(C) = C \cap C^{\perp_M} = \{cG : c \in \FF_q^k,\ cGMG^T = 0\}$. The map $c \mapsto cG$ is injective on $\FF_q^k$ (since $G$ has full row rank), so $\dim \Hull_M(C) = \dim\{c \in \FF_q^k : cGMG^T = 0\} = k - \rank(GMG^T)$.
\end{proof}

\begin{theorem}[Generalized Massey Theorem]
\label{thm:generalized_massey}
Let $C$ be an $[n,k]$ linear code with generator matrix $G$, and let $M$ be a nondegenerate symmetric structure matrix. Then $\Hull_M(C) = \{0\}$ if and only if $GMG^T$ is nonsingular. Moreover, if $\Hull_M(C) = \{0\}$, then
$$\Pi_{C,M} = MG^T(GMG^T)^{-1}G$$
is the orthogonal projector from $\FF_q^n$ onto $C$ with respect to $M$, satisfying $\Pi_{C,M}^2 = \Pi_{C,M}$, $\text{Im}(\Pi_{C,M}) = C$, and for all $v, w \in \FF_q^n$ with $v\Pi_{C,M} = v$ and $w\Pi_{C,M} = 0$, we have $vMw^T = 0$.
\end{theorem}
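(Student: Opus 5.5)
The plan is to follow Massey's original argument, with $I$ replaced by $M$ throughout, and to keep to the paper's convention that vectors are rows and the projector acts on the right, $x \mapsto x\Pi_{C,M}$. Write $A := GMG^T \in \FF_q^{k\times k}$. Since $M$ is symmetric, $A$ is symmetric.

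\emph{The equivalence.} This is immediate from Lemma~\ref{lem:hull_dimension}. We have $\dim \Hull_M(C) = k - \rank(A)$, so $\Hull_M(C)=\{0\}$ holds exactly when $\rank(A)=k$, that is, when the $k\times k$ matrix $A$ is nonsingular. This step uses neither nondegeneracy nor symmetry of $M$.

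\emph{Image and idempotence.} Assume $A$ is invertible and set $\Pi = MG^TA^{-1}G$.
\begin{itemize}
\item For any $x$, the vector $x\Pi = (xMG^TA^{-1})G$ is a combination of the rows of $G$, so $\mathrm{Im}(\Pi)\subseteq C$.
\item For $c = uG \in C$ we get $c\Pi = u\,GMG^T A^{-1} G = uAA^{-1}G = uG = c$. So $\Pi$ fixes $C$ pointwise, which gives $\mathrm{Im}(\Pi)=C$.
\item Idempotence follows by direct expansion: $\Pi^2 = MG^TA^{-1}(GMG^T)A^{-1}G = MG^TA^{-1}G = \Pi$.
\end{itemize}

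\emph{Orthogonality.} Take $v$ with $v\Pi = v$. Then $v\in\mathrm{Im}(\Pi) = C$, so $v = uG$ for some $u$. Take $w$ with $w\Pi = 0$, that is, $(wMG^TA^{-1})G = 0$.
\begin{itemize}
\item Because $G$ has full row rank, the map $y\mapsto yG$ is injective, so $wMG^TA^{-1}=0$.
\item Multiplying on the right by $A$ gives $wMG^T = 0$.
\item Then $vMw^T = uGMw^T = u\,(wM^TG^T)^T = u\,(wMG^T)^T = 0$, using $M=M^T$.
\end{itemize}
This last step is where symmetry of $M$ is needed. As a by-product, $\ker\Pi = C^{\perp_M}$.

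There is no substantial obstacle here. The only points needing care are the row-vector convention, which fixes the order of the factors $M$ and $G^T$ in $\Pi_{C,M}$, and the use of full row rank of $G$ to cancel $G$ on the right in the kernel computation.
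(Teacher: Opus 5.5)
Your proposal is correct and follows the paper's proof essentially step by step. You obtain the equivalence from Lemma~\ref{lem:hull_dimension}, get idempotence and the image property by direct computation, and prove orthogonality by cancelling $G$ and $(GMG^T)^{-1}$ to reach $wMG^T=0$ and then using $M=M^T$; your side remarks (the equivalence needs neither nondegeneracy nor symmetry of $M$, and $\ker\Pi_{C,M}=C^{\perp_M}$) are also correct.
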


\begin{proof}
By Lemma~\ref{lem:hull_dimension}, $\dim \Hull_M(C) = k - \rank(GMG^T)$. Thus $\Hull_M(C) = \{0\}$ if and only if $\rank(GMG^T) = k$, i.e., $GMG^T$ is nonsingular.
For the projector formula, we verify that $\Pi_{C,M} = MG^T(GMG^T)^{-1}G$ satisfies the three defining properties. Idempotence $\Pi_{C,M}^2 = \Pi_{C,M}$ follows by direct computation.

For the image, let $v = cG \in C$ where $c \in \mathbb{F}_q^k$.
Then
\begin{align*}
v\Pi_{C,M}
&= cG \cdot MG^T(GMG^T)^{-1}G\\
&= c(GMG^T)(GMG^T)^{-1}G\\
&= cG = v,
\end{align*}
so $v$ lies in the image of $\Pi_{C,M}$. Conversely, any vector in the image has the form $w\Pi_{C,M} = wMG^T(GMG^T)^{-1}G$ for some row vector $w \in \mathbb{F}_q^n$; setting $c = wMG^T(GMG^T)^{-1} \in \mathbb{F}_q^k$ gives $w\Pi_{C,M} = cG \in C$. Hence the image equals $C$.

For orthogonality, let $v, w \in \mathbb{F}_q^n$ satisfy $v\Pi_{C,M} = v$ and $w\Pi_{C,M} = 0$. Then $v = cG$ for some $c \in \mathbb{F}_q^k$, and from $wMG^T(GMG^T)^{-1}G = 0$ we obtain $wMG^T = 0$ since $(GMG^T)^{-1}$ is invertible and $G$ has full row rank. Therefore $vMw^T = cGMw^T = c(wMG^T)^T = 0$, as required.
\end{proof}

\subsection{Characterization and Construction}

With Generalized Massey in hand, we can extend Bardet et al.'s reduction to $M$-LCD codes for arbitrary structure matrices $M$. However, a crucial question arises: Which choices of $M$ yield a valid reduction to graph isomorphism? That is, for which $M$ does the orthogonal projector map $C \mapsto \Pi_{C,M}$ satisfy
\[
C_1 \cong C_2 \iff \Pi_{C_1,M} \cong \Pi_{C_2,M}
\]
for all $M$-LCD codes $C_1, C_2$?
To answer this, we first collect some auxiliary lemmas.

\begin{lemma}[Centralizer of the permutation action, \cite{ishizuka2025ccsposter}]
\label{lem:centralizer}
Let $\FF$ be any field and let $S_n$ act on $\FF^n$ by permuting coordinates. Then the centralizer of this action is
\[
\{X \in \FF^{n \times n} : P^T X P = X \text{ for all } P \in S_n\} = \operatorname{span}\{I, J\},
\]
where $J$ is the all-ones matrix.
\end{lemma}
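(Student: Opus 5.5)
The plan is to prove both inclusions directly, working entrywise with the conjugation action $X \mapsto P^T X P$.

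For the easy inclusion, note that $P^T I P = P^T P = I$, since permutation matrices are orthogonal over any field. Also $P^T J P = J$, because $J = \1^T\1$ and $\1 P = \1$ (permuting the entries of the all-ones vector leaves it unchanged). By linearity, $P^T(aI+bJ)P = aI+bJ$ for every $P$, so $\operatorname{span}\{I,J\}$ lies in the centralizer.

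For the reverse inclusion, I first compute the action on entries. With the convention $(P_\pi)_{ij}=1$ iff $j=\pi(i)$, one checks that $(P_\pi^T X P_\pi)_{ij} = X_{\pi^{-1}(i)\pi^{-1}(j)}$. Hence $P_\pi^T X P_\pi = X$ for all $\pi$ says exactly that $X_{\sigma(i)\sigma(j)} = X_{ij}$ for all $\sigma\in S_n$ and all $i,j$. In other words, $X$ is constant on the orbits of $S_n$ acting diagonally on $\{1,\dots,n\}^2$.

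These orbits are the diagonal $\{(i,i)\}$, since $S_n$ is transitive on points, and, when $n\ge 2$, the off-diagonal pairs $\{(i,j): i\neq j\}$, since $S_n$ is $2$-transitive. This can be made explicit: given $(i,j)$ and $(k,l)$ with $i\ne j$ and $k\ne l$, there is a permutation sending $i\mapsto k$ and $j\mapsto l$. Consequently $X_{ii}=\alpha$ for all $i$ and $X_{ij}=\beta$ for all $i\neq j$, so $X = (\alpha-\beta)I + \beta J \in \operatorname{span}\{I,J\}$.

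The only care needed is with degenerate cases and conventions. When $n=1$, we have $I=J$ and the claim is trivial. The index bookkeeping for $P_\pi^T X P_\pi$ has to be stated correctly, although either orientation gives the same orbit condition. No step is a genuine obstacle. The argument is independent of the field, since it uses only the combinatorics of the action and the orthogonality of permutation matrices.
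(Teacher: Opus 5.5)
Your proof is correct and follows the same route as the paper: both turn invariance under $X \mapsto P^T X P$ into constancy of the entries of $X$ on the two orbits (diagonal and off-diagonal) of $S_n$ acting on index pairs, which gives $X = aI + bJ$. You also check several points the paper leaves implicit: the easy inclusion $P^T(aI+bJ)P = aI+bJ$, the entry formula $(P_\pi^T X P_\pi)_{ij} = X_{\pi^{-1}(i)\pi^{-1}(j)}$, the $2$-transitivity needed for the off-diagonal orbit, and the degenerate case $n=1$.
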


\begin{proof}
The symmetric group $S_n$ acts on the index set $\{1, \ldots, n\} \times \{1, \ldots, n\}$ by $\pi \cdot (i,j) = (\pi(i), \pi(j))$. This action has exactly two orbits: the diagonal $\Delta = \{(i,i) : i = 1, \ldots, n\}$ and the off-diagonal $\Omega = \{(i,j) : i \neq j\}$. Since $X$ is invariant under all permutations, its entries must be constant on each orbit. Thus $X_{ii} = a$ for all $i$ (diagonal entries) and $X_{ij} = b$ for all $i \neq j$ (off-diagonal entries), giving $X = aI + bJ$ for some $a, b \in \FF$.
\end{proof}

\begin{lemma}[Proportionality of quadratic forms]
\label{lem:quadratic_proportionality}
Let $q$ be odd and $n \geq 3$. If $Q_1, Q_2$ are nondegenerate quadratic forms on $\FF_q^n$ with the same zero set, then $Q_2 = \lambda Q_1$ for some $\lambda \in \FF_q^*$.
\end{lemma}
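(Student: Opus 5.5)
Write $Z$ for the common zero set. The plan is to show that two nondegenerate quadratic forms with the same zero set agree up to a scalar, by exploiting lines through a fixed isotropic vector. The key point is that $Z$ contains a spanning set of isotropic vectors rich enough to pin down the bilinear form up to scale. Throughout, $Q_i(x)=\tfrac12 B_i(x,x)$ with $B_i$ nondegenerate symmetric.

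\emph{Step 1: $Z$ contains a nonzero vector.} For $n\ge 3$, Theorem~\ref{thm:weil_lidl_niederreiter} gives
\[
N(Q_1)\ge q^{n-1}-(q-1)q^{(n-2)/2}>1,
\]
so there is a nonzero $u\in Z$. By Lemma~\ref{lem:hyperbolic_completion} applied to $B_1$, there is $v$ with $\{u,v\}$ spanning a hyperbolic plane $H$ for $B_1$, and $V=H\perp H^{\perp_1}$.

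\emph{Step 2: use lines through $u$.} For any $w\in V$ and $t\in\FF_q$ we have
\[
Q_i(w+tu)=Q_i(w)+tB_i(w,u).
\]
This is affine in $t$. If $B_1(w,u)\neq 0$, it has exactly one root $t$, and that root must then be a root of $Q_2(w+tu)$ as well. Hence $Q_2(w+tu)$, affine in $t$, is either identically zero or has the same unique root. If it were identically zero, then every point $w+tu$ would lie in $Z$, including the one that is not a root of $Q_1$, a contradiction. So $B_2(w,u)\neq 0$ and
\[
\frac{Q_1(w)}{B_1(w,u)}=\frac{Q_2(w)}{B_2(w,u)}.
\]
Conversely, if $B_1(w,u)=0$, comparing zero patterns along the line shows $Q_1(w)=0\iff Q_2(w)=0$ and $B_2(w,u)=0$. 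Thus the hyperplanes $u^{\perp_1}$ and $u^{\perp_2}$ coincide, so $B_2(\cdot,u)=\lambda_u B_1(\cdot,u)$ for some $\lambda_u\neq0$, and therefore $Q_2(w)=\lambda_u Q_1(w)$ for all $w\notin u^{\perp_1}$.

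\emph{Step 3: extend to the whole space.} The vectors outside the hyperplane $u^{\perp_1}$ span $V$, but we need more: equality of quadratic forms off a hyperplane. Consider $R=Q_2-\lambda_uQ_1$, a quadratic form vanishing on the complement of a hyperplane $K$. For $w\in K$, pick $z\notin K$. Then $w+sz\notin K$ for all $s\neq0$, so the polynomial $s\mapsto R(w+sz)$ has degree at most $2$ and vanishes at the $q-1\ge 2$ nonzero values of $s$.

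This is the main obstacle: when $q=3$ there are only two such points, which is not enough to force vanishing of a degree-$2$ polynomial. In that case I would use two independent directions $z,z'\notin K$ and the linearity of the bilinear part. Concretely, $R$ vanishing off $K$ forces $B_R(z,z')=0$ for $z,z'\notin K$ with $z+z'\notin K$, and this is then extended by linearity. If that fails, the fallback is to also vary $u$ over several isotropic vectors: show that the $\lambda_u$ agree for isotropic $u,u'$ with $B_1(u,u')\neq0$, via a common $w$ outside both hyperplanes, which exists since $q\ge3$ and two hyperplanes do not cover $V$. Equality off $u^{\perp}\cap u'^{\perp}$-type sets for enough $u$ then covers all of $V$, because $Z\setminus\{0\}$ spans $V$ when $n\ge3$ (the isotropic vectors of $H$ together with $u+$ suitable vectors of $H^{\perp}$). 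This yields $Q_2=\lambda Q_1$.
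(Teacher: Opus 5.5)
Your proof is correct, and it takes a genuinely different route from the paper's.

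\emph{The paper's route.} It fixes a non-isotropic base point $u_0$ and restricts to lines $u_0+tw$ with $w$ isotropic. Matching roots gives $B_2(u_0,w)=\lambda B_1(u_0,w)$ only for isotropic $w$. So the paper must then show, via the Witt decomposition, that isotropic vectors span $\FF_q^n$. Afterwards it has to glue the scalars $\lambda_{u_0}$ across different non-isotropic base points.

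\emph{Your route.} You fix one isotropic $u$ and use the lines $w+tu$ through arbitrary $w$. Since $Q_i(u)=0$, these restrictions are affine for every $w$. Comparing roots then gives, in one stroke, $u^{\perp_1}=u^{\perp_2}$, hence $B_2(\cdot,u)=\lambda_uB_1(\cdot,u)$, and $Q_2=\lambda_uQ_1$ off the hyperplane $K=u^{\perp_1}$.

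\emph{What this buys.} It needs only a single nonzero isotropic vector, and it never uses nondegeneracy of $Q_2$. It also avoids gluing scalars across base points, which is the most delicate step in the paper's argument. The only price is an extension across $K$.

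That extension is easier than you think, and the $q=3$ obstacle you raise is not real. For $w\in K$ and $z\notin K$ you already know $R(z)=0$. So $s\mapsto R(w+sz)=R(w)+sB_R(w,z)$ is affine in $s$. It vanishes at the $q-1\ge 2$ values $s\neq 0$, which forces $R(w)=0$. Equivalently, use $R(w+z)+R(w-z)=2R(w)+2R(z)$ with $w\pm z,\,z\notin K$.

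Your option (a) also works once you add one observation. Write $K=\ker\ell$. All $z,z'$ with $\ell(z)=\ell(z')=1$ satisfy $\ell(z+z')=2\neq 0$, and this affine hyperplane spans $\FF_q^n$, so $B_R=0$.

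The fallback of varying $u$ is unnecessary and can be dropped. So can the hyperbolic plane $H$ from Step~1, which is never used.
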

\begin{proof}
Pick $u_0 \in \FF_q^n$ with $Q_1(u_0) \neq 0$ (and hence $Q_2(u_0) \neq 0$, by the equal-zero-set hypothesis); set $\lambda := Q_2(u_0)/Q_1(u_0) \in \FF_q^*$. For any nonzero isotropic vector~$w$, the line $\{u_0 + t w : t \in \FF_q\}$ yields polynomials
\[
f_i(t) := Q_i(u_0 + t w) = Q_i(u_0) + 2 t B_i(u_0, w),
\]
where $B_i$ denotes the symmetric bilinear form polarized from $Q_i$. Both $f_1, f_2$ are linear in~$t$, with the same zero set in~$\FF_q$. Equating root structures forces $B_2(u_0, w) = \lambda B_1(u_0, w)$: if $B_1(u_0, w) = 0$ then $f_1 \equiv Q_1(u_0)$ has empty zero set, so $B_2(u_0, w) = 0$ as well; otherwise $f_1$ has the unique root $-Q_1(u_0)/(2 B_1(u_0, w))$, and equating with the unique root of $f_2$ gives the desired identity.

For $n \geq 3$ and $q$ odd, the isotropic vectors of any nondegenerate quadratic form $Q$ span $\FF_q^n$. By the Witt decomposition (Theorem~\ref{thm:witt_decomp}), $\FF_q^n = H \perp W$ with $H$ an orthogonal sum of hyperbolic planes and $W$ anisotropic of dimension $\leq 2$: the hyperbolic bases of $H$ are isotropic and span~$H$; for any $w \in W \setminus \{0\}$, surjectivity of the form on a hyperbolic plane provides $h \in H$ with $Q(h) = -Q(w)$, so $h + w$ is isotropic with nonzero $W$-component. Combining these vectors yields a spanning set of $\FF_q^n$ consisting of isotropic vectors. Hence the linear functional $v \mapsto B_2(u_0, v) - \lambda B_1(u_0, v)$ vanishes on a spanning set, so $B_2(u_0, \cdot) = \lambda B_1(u_0, \cdot)$ identically.

Repeating the argument with any $u_0' \in \FF_q^n$ satisfying $Q_1(u_0') \neq 0$ gives $B_2(u_0', \cdot) = \lambda_{u_0'} B_1(u_0', \cdot)$ for some $\lambda_{u_0'} \in \FF_q^*$. Choosing $u_0'$ so that $B_1(u_0, u_0') \neq 0$ (possible by non-degeneracy of $B_1$ together with $\dim\{v : B_1(u_0, v) \neq 0\} = n - 1$), the symmetry of the bilinear forms yields
\[
\lambda B_1(u_0, u_0') = B_2(u_0, u_0') = B_2(u_0', u_0) = \lambda_{u_0'} B_1(u_0', u_0) = \lambda_{u_0'} B_1(u_0, u_0'),
\]
so $\lambda_{u_0'} = \lambda$. Since $\{u : Q_1(u) \neq 0\}$ spans $\FF_q^n$, $B_2 = \lambda B_1$ identically, hence $Q_2 = \lambda Q_1$.
\end{proof}

\begin{theorem}[Characterization of valid reductions]
\label{thm:main_characterization}
Let $M$ be a nondegenerate symmetric structure matrix and $n \geq 3$. Then
\[
C_1 \cong C_2 \iff \Pi_{C_1,M} \cong \Pi_{C_2,M}
\]
holds for all $M$-LCD codes $C_1, C_2$ if and only if $M = aI + bJ$ for some $a, b \in \FF_q$ with $a \neq 0$ and $a + nb \neq 0$.
\end{theorem}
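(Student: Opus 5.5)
The plan has two directions, and the key tool throughout is the transformation law of $\Pi_{C,M}$ under a permutation matrix $P$.

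\emph{Sufficiency.} Take $M=aI+bJ$ with $a\neq 0$ and $a+nb\neq 0$. The eigenvalues of $M$ are $a$ (multiplicity $n-1$) and $a+nb$, so $M$ is nondegenerate. By Lemma~\ref{lem:centralizer}, $P^TMP=M$, and hence $PMP^T=M$, for every permutation matrix $P$. For an $M$-LCD code $C$ with generator $G$, the code $CP$ has generator $GP$ and satisfies $GPMP^TG^T=GMG^T$. So $CP$ is again $M$-LCD, and
\[
\Pi_{CP,M}=MP^TG^T(GMG^T)^{-1}GP=P^T(PMP^T)G^T(GMG^T)^{-1}GP=P^T\Pi_{C,M}P .
\]
The matrix $\Pi_{C,M}$ does not depend on $G$: replacing $G$ by $AG$ with $A$ invertible leaves the formula unchanged. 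It is also a complete invariant, since its image is $C$ by Theorem~\ref{thm:generalized_massey}. Thus $C_2=C_1P$ gives $\Pi_{C_2,M}=P^T\Pi_{C_1,M}P$. Conversely, if $\Pi_{C_2,M}=P^T\Pi_{C_1,M}P=\Pi_{C_1P,M}$, then taking images gives $C_2=C_1P$.

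\emph{Necessity.} Suppose the equivalence holds for all $M$-LCD codes. I will show $P^TMP=\lambda_P M$ for each $P$, and then upgrade this to $P^TMP=M$. Fix $P$. For any vector $x$ with $xMx^T\neq 0$, the line $C=\langle x\rangle$ is $M$-LCD. Since $C\cong CP$, the assumption gives $\Pi_{C,M}\cong\Pi_{CP,M}$, provided $CP$ is $M$-LCD. So the property ``$\langle x\rangle$ is $M$-LCD'' must be compatible with permutations.

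The cleanest route works with $(n-1)$-dimensional codes, or equivalently with lines and their duals. I would argue as follows. If some $x$ had $xMx^T\neq0$ but $xPMP^Tx^T=0$, look at the $M$-dual hyperplane of $x$, whose $M$-hull equals that of $\langle x\rangle$. Choose a different code $C'$ that is $M$-LCD with $\Pi_{C',M}$ isomorphic to $\Pi_{C,M}$ via $P$ but $C'\neq CP$. I expect this to be awkward to arrange.

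The more robust alternative reads the hypothesis as forcing the set of $M$-LCD codes to be permutation-stable. Presumably the intended meaning of the statement includes that $\Pi_{CP,M}$ is defined whenever $\Pi_{C,M}$ is, so that $C\mapsto\Pi_{C,M}$ intertwines the actions. Under that reading, the zero set of $Q_1(x)=xMx^T$ equals the zero set of $Q_2(x)=xP^TMPx^T$. Lemma~\ref{lem:quadratic_proportionality} (for $n\ge3$, $q$ odd) then gives $P^TMP=\lambda_PM$.

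To remove $\lambda_P$, I would use the projector identity itself. Direct computation gives $\Pi_{CP,M}=M P^TG^T(GP M P^TG^T)^{-1}GP$. With $PMP^T=\lambda_P^{-1}M$ this equals $\lambda_P\,P^T M'$-type expressions; more simply, scalars cancel in $\Pi$, so $\Pi_{CP,M}=P^T\Pi_{C,M}P$ regardless of $\lambda_P$. To pin the scalar down, note that $P\mapsto\lambda_P$ is a homomorphism $S_n\to\FF_q^*$, which is abelian, so it factors through the sign character. For a transposition $P$ we have $P^2=I$, so $\lambda_P^2=1$. Comparing a diagonal entry, $(P^TMP)_{ii}=M_{\pi(i)\pi(i)}$, and the sum of diagonal entries gives $\operatorname{tr}M=\lambda_P\operatorname{tr}M$. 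The trace may vanish, so instead compare the orbits of entries directly. If $\lambda_P=-1$ for transpositions, then $M_{ij}=-M_{\pi(i)\pi(j)}$ on the $S_n$-orbits of index pairs. Applying this with a transposition fixing both $i$ and $j$ (possible for off-diagonal pairs when $n\ge4$, and for diagonal entries when $n\ge3$) forces those entries to vanish. The small case $n=3$ for off-diagonal entries needs a separate check.

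Once $\lambda_P=1$ is established, Lemma~\ref{lem:centralizer} gives $M=aI+bJ$, and nondegeneracy gives $a\neq0$ and $a+nb\neq0$. The main obstacles are twofold. First, the necessity step from ``same reducibility'' to equality of zero sets depends on how the statement's quantifier treats codes whose permuted image is not $M$-LCD. Second, characteristic $2$ falls outside Lemma~\ref{lem:quadratic_proportionality}; there I would argue directly with the bilinear form on lines, using $xMx^T=\sum_i M_{ii}x_i^2$.
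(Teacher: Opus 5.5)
Your proposal is correct and follows the paper's proof step for step. Sufficiency goes via $PMP^T=M$, $MP^T=P^TM$ and the image property. Necessity reads the hypothesis as forcing $CP$ to be $M$-LCD (exactly the paper's reading), applies Lemma~\ref{lem:quadratic_proportionality} to $M$-LCD lines, and rules out $\lambda_\tau=-1$ before invoking Lemma~\ref{lem:centralizer}.

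Your flagged $n=3$ case is immediate from $\tau=(i\;j)$ itself: $(\tau M\tau^T)_{ij}=M_{ji}=M_{ij}$ forces $M_{ij}=-M_{ij}$. Like the paper's proof, yours genuinely needs $q$ odd, and your suggested characteristic-$2$ fix via lines cannot work, because $xMx^T=\sum_i M_{ii}x_i^2$ sees only the diagonal of $M$.
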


\begin{proof}
$(\Leftarrow)$
Suppose $M = aI + bJ$. Then $PMP^T = M$ for all permutation matrices~$P$ (since $PIP^T = I$ and $PJP^T = J$). For the forward implication of the biconditional, if $\Pi_{C_2,M} = P^T\Pi_{C_1,M}P$ for some permutation~$P$, then $C_2 = \mathrm{Im}(\Pi_{C_2,M}) = C_1P$ by the image property $\mathrm{Im}(\Pi_{C,M}) = C$ (Theorem~\ref{thm:generalized_massey}). This holds for any nondegenerate~$M$. For the reverse implication, suppose $C_2 = C_1 P$, so that $G_2 := G_1 P$ is a generator matrix of $C_2$. Since $M = aI + bJ$ and every permutation matrix $P$ satisfies $P\1 = \1$, we have $PI = IP = P$ and $PJ = JP = J$, hence $PM = MP$, and combined with $PP^T = I$, also $PMP^T = M$. Taking the transpose of $PM = MP$ together with $M = M^T$ yields $MP^T = P^T M$. Therefore,
\begin{align*}
\Pi_{C_2,M}
&= M G_2^T \bigl(G_2 M G_2^T\bigr)^{-1} G_2 \\
&= M (G_1 P)^T \bigl(G_1 P M (G_1 P)^T\bigr)^{-1} G_1 P \\
&= M P^T G_1^T \bigl(G_1 (PMP^T) G_1^T\bigr)^{-1} G_1 P \\
&= M P^T G_1^T (G_1 M G_1^T)^{-1} G_1 P
   && \text{(using $PMP^T = M$)} \\
&= P^T M G_1^T (G_1 M G_1^T)^{-1} G_1 P
   && \text{(using $MP^T = P^T M$)} \\
&= P^T \Pi_{C_1,M} P.
\end{align*}

$(\Rightarrow)$
Suppose the biconditional holds. We first observe that for any $M$-LCD code~$C$ and any permutation~$P$, the code $CP$ must also be $M$-LCD: indeed, $C \cong CP$ trivially, so the forward direction of the biconditional applied with $C_1 = C$, $C_2 = CP$ requires $\Pi_{CP,M}$ to be defined, which by Theorem~\ref{thm:generalized_massey} forces $CP$ to be $M$-LCD. In particular, applying this to the one-dimensional $M$-LCD codes $\langle u \rangle$ (those with $uMu^T \neq 0$) yields $uMu^T = 0 \Leftrightarrow u(PMP^T)u^T = 0$ for all $u \in \FF_q^n$ and all~$P$, so the quadratic forms $u \mapsto uMu^T$ and $u \mapsto u(PMP^T)u^T$ share the same zero set. Both are nondegenerate (since $M$ is nondegenerate symmetric), so by Lemma~\ref{lem:quadratic_proportionality}, $PMP^T = \lambda_P M$ for some $\lambda_P \in \FF_q^*$.

The map $P \mapsto \lambda_P$ is a group homomorphism $S_n \to \FF_q^*$. For a transposition $\tau$, $\tau^2 = e$ gives $\lambda_\tau^2 = 1$, so $\lambda_\tau \in \{\pm 1\}$. If $\lambda_\tau = -1$ for some $\tau = (i\;j)$, then for $k, l \notin \{i, j\}$ (which $\tau$ fixes), $M_{kl} = (\tau M \tau^T)_{kl} = -M_{kl}$, forcing $M_{kl} = 0$ in odd characteristic; ranging over all transpositions yields $M = 0$, contradicting non-degeneracy. Hence $\lambda_\tau = 1$ for every transposition. Since transpositions generate $S_n$, $\lambda_P = 1$ for all $P \in S_n$, i.e., $PMP^T = M$, and Lemma~\ref{lem:centralizer} gives $M = aI + bJ$.
\end{proof}

Based on the above theorem, we say a code $C$ is \emph{GI-reducible} (via the orthogonal projector method) if there exists a nondegenerate symmetric structure matrix $M=aI+bJ$ such that every code in the permutation class of $C$ is $M$-LCD and the biconditional
\[
C_1 \cong C_2 \iff \Pi_{C_1,M} \cong \Pi_{C_2,M}
\]
holds for all $C_1, C_2$ in that class. The theorem of Bardet et al.\ shows that all LCD codes are GI-reducible (with $M = I$).

\begin{remark}[Non-degeneracy of $M = aI + bJ$]
\label{rem:nondegeneracy}
A direct computation shows
\[
\det(aI + bJ) = a^{n-1}(a + nb),
\]
where $n$ is treated as an element of $\mathbb{F}_q$. Thus $M = aI + bJ$ is nondegenerate if and only if $a \neq 0$ and $a + nb \neq 0$. (When $p \mid n$, the second condition reduces to $a \neq 0$.) Throughout this paper, when we write $M = aI + bJ$, we implicitly assume these conditions hold.
\end{remark}

\begin{example}
\label{ex:gi_reduction}
Consider the $[4, 2]_3$ code $C$ over $\FF_3$ with generator matrix $G = \begin{psmallmatrix} 1 & 1 & 0 & 0 \\ 0 & 1 & 1 & 0 \end{psmallmatrix}$, and let $P \in S_4$ be the permutation matrix that swaps coordinates~$1$ and~$2$. Define $C' := CP$, with generator matrix $G' = GP = \begin{psmallmatrix} 1 & 1 & 0 & 0 \\ 1 & 0 & 1 & 0 \end{psmallmatrix}$.
Under the standard inner product, $GG^T = G'(G')^T = \begin{psmallmatrix} 2 & 1 \\ 1 & 2 \end{psmallmatrix}$ has $\det = 0$ in $\FF_3$, so neither $C$ nor $C'$ is LCD: the projectors $\Pi_C, \Pi_{C'}$ are undefined and the reduction of Bardet et al.\ does not apply.

With $M = I + J$, however, both codes are $M$-LCD: a direct computation gives $GMG^T = G'M(G')^T = \begin{psmallmatrix} 0 & 2 \\ 2 & 0 \end{psmallmatrix}$, with $\det = 2 \neq 0$ in $\FF_3$. The generalized orthogonal projectors $\Pi_{C,M} = MG^T(GMG^T)^{-1}G$ and $\Pi_{C',M} = M(G')^T(G'M(G')^T)^{-1}G'$ evaluate to
\[
\Pi_{C,M} = \begin{pmatrix}
1 & 1 & 0 & 0 \\
0 & 0 & 0 & 0 \\
0 & 1 & 1 & 0 \\
1 & 2 & 1 & 0
\end{pmatrix},
\qquad
\Pi_{C',M} = \begin{pmatrix}
0 & 0 & 0 & 0 \\
1 & 1 & 0 & 0 \\
1 & 0 & 1 & 0 \\
2 & 1 & 1 & 0
\end{pmatrix},
\]
and one verifies $\Pi_{C',M} = P^T \Pi_{C,M} P$ directly (swapping rows and columns~$1$ and~$2$). Reading $\Pi_{C,M}$ and $\Pi_{C',M}$ as weighted-adjacency matrices on $4$ vertices, the codes $C$ and $C'$ are recognised as permutation equivalent through the graph isomorphism witnessed by $P$, even though Bardet's standard-inner-product reduction fails.
\end{example}

\section{Complete Characterization}
\label{sec:characterization}

Having established the construction method, we now characterize which codes are GI-reducible. We will show in Theorem~\ref{thm:dimension_bound} that any $M$-LCD code (for $M = aI + bJ$) is either LCD with respect to the standard inner product, or has $\dim(\Hull(C)) \leq 1$.

\begin{lemma}[Rank Perturbation]
\label{lemma:rank_perturbation}
For any matrices $A, B \in \FF_q^{n \times n}$, we have
$|\rank(A + B) - \rank(A)| \leq \rank(B)$.
\end{lemma}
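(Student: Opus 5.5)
The plan is to prove both inequalities through subadditivity of rank, $\rank(X+Y) \leq \rank(X) + \rank(Y)$ for $X, Y \in \FF_q^{n \times n}$. Subadditivity holds over any field, so nothing specific to $\FF_q$ is needed. It follows from the column-space inclusion $\mathrm{Col}(X+Y) \subseteq \mathrm{Col}(X) + \mathrm{Col}(Y)$, since every column of $X+Y$ is the sum of the corresponding columns of $X$ and $Y$. Taking dimensions and applying $\dim(U+W) \leq \dim U + \dim W$ gives the inequality.

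For the upper bound, I take $X = A$ and $Y = B$, which gives $\rank(A+B) - \rank(A) \leq \rank(B)$ directly.

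For the lower bound, I write $A = (A+B) + (-B)$ and apply subadditivity to this decomposition:
\[
\rank(A) \leq \rank(A+B) + \rank(-B).
\]
Since $\rank(-B) = \rank(B)$, because scaling by the unit $-1$ preserves the column space, this rearranges to $\rank(A) - \rank(A+B) \leq \rank(B)$.

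The two one-sided bounds together give $|\rank(A+B) - \rank(A)| \leq \rank(B)$. I expect no real obstacle here. The only point needing care is the rewriting $A = (A+B) - B$ in the second direction, which keeps the argument symmetric. The same argument works verbatim for non-square matrices of equal size; later the lemma will be applied with $B$ a rank-one matrix such as $bG J G^T$.
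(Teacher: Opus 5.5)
Your proof is correct and follows essentially the same route as the paper: subadditivity of rank from the inclusion $\mathrm{Im}(A+B) \subseteq \mathrm{Im}(A) + \mathrm{Im}(B)$, applied once directly and once to the decomposition $A = (A+B) + (-B)$. Your explicit note that $\rank(-B) = \rank(B)$ spells out a step the paper leaves implicit.
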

\begin{proof}
Since $\text{Im}(A+B) \subseteq \text{Im}(A) + \text{Im}(B)$, we have $\rank(A+B) \leq \rank(A) + \rank(B)$. Applying this to $A = (A+B) + (-B)$ gives $\rank(A) \leq \rank(A+B) + \rank(B)$. Combining yields the result.
\end{proof}

\begin{theorem}[Dimension Bound, \cite{ishizuka2025ccsposter}]
\label{thm:dimension_bound}
Let $C$ be an $[n, k]$ code over $\FF_q$, and let $B$ be a bilinear form with nondegenerate structure matrix $M = aI + bJ$. Then
\[
|\dim \Hull_M(C) - \dim \Hull(C)| \leq 1.
\]
\end{theorem}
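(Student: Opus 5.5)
By Lemma~\ref{lem:hull_dimension}, applied once with structure matrix $M = aI + bJ$ and once with $I$, it suffices to show
\[
\bigl|\rank(GMG^T) - \rank(GG^T)\bigr| \leq 1,
\]
where $G$ is a fixed $k \times n$ generator matrix of $C$. Indeed, both hull dimensions equal $k$ minus the corresponding rank, so their difference is exactly the difference of these two ranks, up to sign.

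Expanding the structure matrix gives
\[
GMG^T = a\,GG^T + b\,GJG^T.
\]
Since $J = \1^T\1$ with $\1$ the all-ones row vector, we have
\[
GJG^T = (G\1^T)(G\1^T)^T.
\]
This is an outer product of a single column vector with itself, so it has rank at most $1$.

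Next, $a \neq 0$ by the non-degeneracy assumption (Remark~\ref{rem:nondegeneracy}), so $\rank(a\,GG^T) = \rank(GG^T)$. Applying the Rank Perturbation Lemma~\ref{lemma:rank_perturbation} with $A = a\,GG^T$ and $B = b\,GJG^T$ then gives
\[
\bigl|\rank(GMG^T) - \rank(GG^T)\bigr| \leq \rank(b\,GJG^T) \leq 1.
\]
That lemma is stated for $n \times n$ matrices, but its proof works verbatim for $k \times k$ matrices.

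There is no real obstacle here. The only point needing care is the role of $a \neq 0$: it is what makes rescaling by $a$ rank-preserving. If $a = 0$ the conclusion could fail, since then $GMG^T = b\,GJG^T$ has rank at most $1$ regardless of $GG^T$.
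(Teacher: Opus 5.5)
Your proposal is correct and follows essentially the same route as the paper: you use Lemma~\ref{lem:hull_dimension} to reduce to comparing $\rank(GMG^T)$ with $\rank(GG^T)$, write $GMG^T = a\,GG^T + b\,vv^T$ with $v = G\1^T$, and apply the Rank Perturbation Lemma using $a \neq 0$. Your observation that Lemma~\ref{lemma:rank_perturbation} applies verbatim to $k \times k$ matrices is correct, and the paper's proof uses it the same way without stating this.
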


\begin{proof}
Let $G$ be a generator matrix of $C$. By Lemma~\ref{lem:hull_dimension}, $\dim \Hull_M(C) = k - \rank(GMG^T)$ and $\dim \Hull(C) = k - \rank(GG^T)$. Since $GMG^T = G(aI + bJ)G^T = aGG^T + bvv^T$ where $v = G\1^T$, we have $\rank(GMG^T) = \rank(GG^T + \frac{b}{a}vv^T)$. Thus $|\dim \Hull_M(C) - \dim \Hull(C)| = |\rank(GMG^T) - \rank(GG^T)| \leq \rank(vv^T) \leq 1$ by Lemma~\ref{lemma:rank_perturbation}.
\end{proof}

Now, we establish a complete characterization of $M$-LCD codes for $M=aI+bJ$. The following lemmas are standard results, whose proofs we include for completeness.

\begin{lemma}[Orthogonal decomposition]
\label{lem:orthogonal_decomposition}
Any $[n,k]_q$ code $C$ admits a decomposition
$$
C = \Hull(C) \perp U,
$$
where $U$ is any subspace of $C$ complementary to $\Hull(C)$; every such $U$ is an LCD code.
\end{lemma}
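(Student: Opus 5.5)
The plan is to fix an arbitrary complement $U$ of $H := \Hull(C)$ inside $C$, so that $C = H \oplus U$ as vector spaces. Two things then need to be shown: that $H \perp U$ with respect to the standard inner product, and that $U$ is LCD.

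The orthogonality is immediate from the definition of the hull. Every $h \in H$ lies in $C^\perp$, and $U \subseteq C$, so $\langle h, u\rangle = 0$ for all $u \in U$. Together with $H \cap U = \{0\}$ and $H + U = C$, this gives the orthogonal direct sum $C = \Hull(C) \perp U$.

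For the LCD property of $U$, I will compute $\Hull(U) = U \cap U^\perp$ directly. Take $u \in U \cap U^\perp$. Then $u$ is orthogonal to $U$ because $u \in U^\perp$. It is also orthogonal to $H$, since $H \subseteq C^\perp$ and $u \in C$. Since $C = H + U$, it follows that $u \in C^\perp$. So $u \in C \cap C^\perp = H$, and because $H \cap U = \{0\}$ we conclude $u = 0$. Hence $\Hull(U) = \{0\}$, i.e.\ $U$ is LCD.

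As an optional cross-check in matrix terms, one can choose a generator matrix $G = \begin{psmallmatrix} G_H \\ G_U \end{psmallmatrix}$ adapted to the decomposition. Then $GG^T = \mathrm{diag}(0, G_U G_U^T)$, and Lemma~\ref{lem:hull_dimension} gives $\rank(GG^T) = k - \dim H = \dim U$. This forces $G_U G_U^T$ to be nonsingular, which is Massey's criterion (Theorem~\ref{thm:massey}) for $U$ being LCD.

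There is no real obstacle here. The one step that needs care is the LCD argument: it must use orthogonality of $u$ to all of $C$, assembled from its orthogonality to both $H$ and $U$, rather than orthogonality to $U$ alone.
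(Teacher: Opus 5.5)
Your proof is correct and follows the same route as the paper. Orthogonality of $\Hull(C)$ and $U$ comes straight from $\Hull(C) \subseteq C^\perp$, and $U$ is LCD because any $u \in \Hull(U)$ lies in $\Hull(C) \cap U = \{0\}$; you also justify the step $u \in \Hull(C)$ (orthogonality to both $\Hull(C)$ and $U$, hence to $C$), which the paper asserts without comment, and your matrix cross-check via Massey's criterion is valid but not needed.
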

\begin{proof}
Let $H = \Hull(C) = C \cap C^\perp$ and let $U$ be any complement of $H$ in $C$. By the definition of hull, $C = H \perp U$.
Suppose there exists nonzero $x \in \Hull(U)$.
Then $x \in \Hull(C)$, contradicting $\Hull(C) \cap U = \{0\}$.
Therefore $U$ is LCD.
\end{proof}

\begin{lemma}[Schur complement]
\label{lem:schur_complement}
Let $A \in \FF_q^{k \times k}$ be invertible, $B \in \FF_q^{k \times m}$, $C \in \FF_q^{m \times k}$, and $D \in \FF_q^{m \times m}$. Then
\[
\det \begin{psmallmatrix} A & B \\ C & D \end{psmallmatrix} = \det(A) \cdot \det(D - CA^{-1}B).
\]
\end{lemma}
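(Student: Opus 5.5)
The plan is to reduce to the case $D = 0$ by block Gaussian elimination, using invertibility of $A$ to clear the lower-left block. Concretely, I would write the factorization
\[
\begin{pmatrix} A & B \\ C & D \end{pmatrix}
=
\begin{pmatrix} I_k & 0 \\ CA^{-1} & I_m \end{pmatrix}
\begin{pmatrix} A & B \\ 0 & D - CA^{-1}B \end{pmatrix},
\]
and verify it by multiplying out blockwise. The top row gives $(A, B)$. The bottom row gives $(CA^{-1}A,\; CA^{-1}B + D - CA^{-1}B) = (C, D)$. The only input needed here is that $A^{-1}$ exists, which is the hypothesis.

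Next I take determinants and use multiplicativity of $\det$ over any field, in particular $\FF_q$. The left factor is block lower triangular with identity diagonal blocks, so it is unipotent triangular and has determinant $1$. The right factor is block upper triangular, so its determinant is $\det(A)\det(D - CA^{-1}B)$.

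The block-triangular determinant formula is the only step needing justification. I would justify it either by the Laplace expansion along the first $k$ columns, where only the minor from rows $1,\dots,k$ survives because the lower-left block is zero, or by factoring
\[
\begin{pmatrix} A & B \\ 0 & S \end{pmatrix}
=
\begin{pmatrix} I_k & 0 \\ 0 & S \end{pmatrix}
\begin{pmatrix} A & B \\ 0 & I_m \end{pmatrix}
\]
and expanding each factor along the identity rows or columns. There is no real obstacle here; the argument is characteristic-free and works verbatim over $\FF_q$. This matters because the lemma will be applied to Gram matrices such as $GMG^T$ in arbitrary characteristic.
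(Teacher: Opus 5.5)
Your proof is correct and matches the paper's argument: the paper left-multiplies by $\begin{psmallmatrix} I & 0 \\ -CA^{-1} & I \end{psmallmatrix}$, the inverse of your unipotent factor, to reach the same block upper-triangular matrix, and then takes determinants. Your added justification of the block-triangular determinant formula fills in a step the paper leaves implicit. One wording slip: your opening sentence says you reduce to the case $D = 0$, but the elimination actually clears the lower-left block $C$.
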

\begin{proof}
$\begin{psmallmatrix} I & 0 \\ -CA^{-1} & I \end{psmallmatrix} \begin{psmallmatrix} A & B \\ C & D \end{psmallmatrix} = \begin{psmallmatrix} A & B \\ 0 & D - CA^{-1}B \end{psmallmatrix}$. Taking determinants gives the result.
\end{proof}

\begin{theorem}[Complete Characterization]
\label{thm:gi_reducible_count}
An $[n,k]_q$ code $C$ is GI-reducible if and only if $C$ is LCD, or $\dim(\Hull(C)) = 1$ and $x \in \Hull(C)$ satisfies $(x,\1) \neq 0$.
\end{theorem}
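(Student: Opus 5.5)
By the definition of GI-reducibility and Theorem~\ref{thm:main_characterization}, the reversal step is automatic once $M = aI + bJ$ is fixed. Since $PMP^T = M$, the $M$-LCD property is permutation invariant, and the biconditional holds on the whole permutation class. So $C$ is GI-reducible if and only if there exist $a \neq 0$ and $b$ with $a + nb \neq 0$ such that $C$ itself is $M$-LCD. By Lemma~\ref{lem:hull_dimension} and the computation in the proof of Theorem~\ref{thm:dimension_bound}, this is equivalent to
\[
\det\bigl(GG^T + t\, vv^T\bigr) \neq 0 \quad\text{for some } t = b/a \text{ with } 1 + nt \neq 0, \qquad v = G\1^T .
\]
The plan is to analyze this determinant according to $h := \dim \Hull(C)$.

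\emph{Case $h = 0$.} Take $t = 0$, i.e.\ $M = I$; this is Bardet et al.

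\emph{Case $h \geq 2$.} Theorem~\ref{thm:dimension_bound} gives $\dim \Hull_M(C) \geq 1$ for every admissible $M$, so $C$ is not reducible. (More directly, $\rank(GG^T + t\,vv^T) \leq (k-h) + 1 < k$.)

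\emph{Case $h = 1$.} Use Lemma~\ref{lem:orthogonal_decomposition} to choose a generator matrix whose first row is the hull vector $x$ and whose remaining rows generate an LCD complement $U$. Then $GG^T = \begin{psmallmatrix} 0 & 0 \\ 0 & A \end{psmallmatrix}$ with $A = G_U G_U^T$ invertible. Write $v = (\alpha, w)$ with $\alpha = (x, \1)$.

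The main computation is then
\[
\det\begin{pmatrix} t\alpha^2 & t\alpha w^T \\ t\alpha w & A + t\, ww^T \end{pmatrix}.
\]
Apply Lemma~\ref{lem:schur_complement}, or simply subtract $w/\alpha$ times the first row when $\alpha \neq 0$, to reduce this to $t\alpha^2 \det(A)$. Alternatively, expand by multilinearity: $GG^T + t\,vv^T$ is a rank-one update, and its determinant is linear in $t$ with constant term $\det(GG^T) = 0$. The $t$-coefficient is $v^T \operatorname{adj}(GG^T)\, v$. Since $\operatorname{adj}$ of the block matrix is $\det(A)\, e_1 e_1^T$, this coefficient equals $\alpha^2 \det A$.

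Hence the determinant is $t\,\alpha^2 \det A$. If $\alpha = 0$, no choice of $M$ works. If $\alpha \neq 0$, any $t \neq 0$ with $1 + nt \neq 0$ works. Such a $t$ exists whenever $q \geq 3$, since at most two values of $t$ are excluded.

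The main obstacle is the edge case $q = 2$. There, $t = 1$ is the only nonzero option, and it needs $1 + n \neq 0$, i.e.\ $n$ even. However, in characteristic $2$ we have $(x,\1) = \sum x_i = \sum x_i^2 = (x,x) = 0$ for the self-orthogonal hull vector, so $\alpha = 0$ automatically. More generally, in characteristic $2$ the identity $(x,\1)=(x,x)^{1/2}$ shows the condition $\alpha \neq 0$ never holds when $h=1$, which makes the statement consistent, and the $q=2$ case never needs a suitable $t$. This same observation is what yields Corollary~\ref{cor:char2}.

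Finally, the $\det A$ computation must be checked to be independent of the basis choice. This holds because nonsingularity of $GMG^T$ is basis-independent.
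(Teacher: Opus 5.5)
Your proof is correct and follows the paper's argument: the LCD case via $M = I$, the dimension bound (or your direct rank count) for $\dim \Hull(C) \geq 2$, and, for hull dimension $1$, the same generator matrix with the hull vector as first row, whose Schur complement gives $\det(GMG^T) = b(x,\1)^2 a^{k-1}\det(G'G'^T)$ (your $t\alpha^2\det A$, up to the factor $a^k$). You also make explicit two points the paper leaves implicit: you reduce GI-reducibility to ``$C$ is $M$-LCD for some admissible $M = aI + bJ$'' via permutation invariance, and you check that an admissible $t \neq 0$ with $1 + nt \neq 0$ actually exists, with $q = 2$ covered because $(x,\1) = 0$ automatically in characteristic $2$.
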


\begin{proof}
LCD codes are always GI-reducible by the reduction of Bardet et al.\ (Theorem~\ref{thm:gi_reduction}). For codes with $\dim(\Hull(C)) = 1$, by Lemma~\ref{lem:orthogonal_decomposition}, we have $C = \langle x \rangle \perp U$ where $\Hull(C) = \langle x \rangle$ and $U$ is an LCD code of dimension $k-1$. We show that $C$ is GI-reducible if and only if $(x,\1) \neq 0$.

Define $G = \begin{psmallmatrix} x \\ G' \end{psmallmatrix}$ where $G'$ generates $U$, and let $w = G'\1^T$. Since $x \in \Hull(C)$, we have $(x,x) = 0$ and $xG'^T = 0$, so
\[
GMG^T = \begin{pmatrix} b(x,\1)^2 & b(x,\1) w^T \\ b(x,\1) w & aG'G'^T + bww^T \end{pmatrix}.
\]
If $(x,\1) = 0$, then the first row and column vanish, so $GMG^T$ is singular for all $(a,b)$: $C$ is not $M$-LCD for any $M = aI + bJ$.

If $(x,\1) \neq 0$ and $b \neq 0$, Lemma~\ref{lem:schur_complement} with $A = b(x,\1)^2$ gives
\begin{align*}
    \det(GMG^T)
    &= b(x,\1)^2 \cdot \det(aG'G'^T + bww^T - bww^T)\\
    &= b(x,\1)^2 \cdot a^{k-1} \cdot \det(G'G'^T).
\end{align*}
Since $U$ is LCD, $\det(G'G'^T) \neq 0$, so $\det(GMG^T) \neq 0$ for any $a \neq 0$, $b \neq 0$ with $a + nb \neq 0$.

Finally, by Theorem~\ref{thm:dimension_bound}, codes with $\dim(\Hull(C)) \geq 2$ cannot be $M$-LCD for any structure matrix $M = aI + bJ$, since $\dim(\Hull_M(C)) \geq \dim(\Hull(C)) - 1 \geq 1$ implies $C$ is not $M$-LCD. Therefore, such codes are not GI-reducible.
\end{proof}

\begin{corollary}\label{cor:char2}
If $q = 2^m$, then $C$ is GI-reducible if and only if $C$ is LCD.
\end{corollary}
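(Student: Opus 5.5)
By Theorem~\ref{thm:gi_reducible_count}, it suffices to show that in characteristic~$2$ no code with $\dim \Hull(C) = 1$ can have a hull vector $x$ with $(x,\1) \neq 0$. Once this is shown, the second alternative in the theorem never occurs, and GI-reducibility coincides with being LCD. The direction ``LCD $\Rightarrow$ GI-reducible'' holds in any characteristic by Theorem~\ref{thm:gi_reduction}.

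The key identity is the Frobenius relation in characteristic~$2$. For any $x \in \FF_q^n$ with $q = 2^m$,
\[
(x,x) = \sum_{i=1}^n x_i^2 = \Bigl(\sum_{i=1}^n x_i\Bigr)^2 = (x,\1)^2,
\]
because squaring is additive in characteristic~$2$. Now let $x \in \Hull(C)$. Then $x \in C^\perp$ and $x \in C$, so $(x,x) = 0$. Hence $(x,\1)^2 = 0$, and since $\FF_q$ is a field, $(x,\1) = 0$.

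Therefore every vector in the hull of any code is orthogonal to $\1$ when $q$ is even. In particular, for $\dim \Hull(C) = 1$ the hull generator $x$ always satisfies $(x,\1) = 0$. By the theorem, such a $C$ is not GI-reducible. Codes with $\dim \Hull(C) \ge 2$ are already excluded by Theorem~\ref{thm:gi_reducible_count}. So in characteristic~$2$ the GI-reducible codes are exactly the LCD codes.

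There is no real obstacle here. The only point to check is that Theorem~\ref{thm:gi_reducible_count} holds for all $q$, including even $q$. Its singularity argument for $(x,\1) = 0$ uses only the block form of $GMG^T$ and no oddness of $q$. The exclusion of hull dimension $\ge 2$ relies on Theorem~\ref{thm:dimension_bound}, which is characteristic-free. So the corollary follows directly.
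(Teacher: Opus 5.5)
Your proof is correct and follows the paper's argument. In characteristic~$2$, the Frobenius identity $(x,x)=(x,\1)^2$ forces every hull vector to satisfy $(x,\1)=0$, so the hull-dimension-$1$ alternative of Theorem~\ref{thm:gi_reducible_count} never occurs. Your extra check that the relevant direction of that theorem (the singular block form together with the dimension bound) is characteristic-free is a sensible addition that the paper leaves implicit.
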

\begin{proof}
In characteristic $2$, $(x,x) = \sum x_i^2 = (\sum x_i)^2 = (x,\1)^2$, so $x \in \Hull(C)$ implies $(x,x) = 0$ and hence $(x,\1) = 0$. By Theorem~\ref{thm:gi_reducible_count}, no code with $\dim(\Hull(C)) = 1$ is GI-reducible.
\end{proof}

By Corollary~\ref{cor:char2}, the GI-reducible codes over $\FF_{2^m}$ are precisely the LCD codes, whose enumeration is given in~\cite{carlet2019new,lishiling2025} (see Appendix~\ref{app:lcd_enumeration}). The enumeration in Section~\ref{sec:enumeration} therefore assumes $q$ is odd; the algorithm of Section~\ref{sec:algorithm} is stated for general $q$.

\section{Counting GI-Reducible Codes}
\label{sec:enumeration}

Having characterized which codes are GI-reducible (Theorem~\ref{thm:gi_reducible_count}), we now enumerate them. By the orthogonal decomposition (Lemma~\ref{lem:orthogonal_decomposition}), the count splits into LCD codes and hull dimension 1 codes with $(x,\1) \neq 0$. Since formulas for $L(n,k,q)$ counting LCD codes are known~\cite{carlet2019new,lishiling2025}, the key is computing $K(n,q) = |\{x \in \FF_q^n : (x,x) = 0, (x,\1) \neq 0\}|$. We compute this using Weil's theorem on quadratic forms (Theorem~\ref{thm:weil_lidl_niederreiter}).

\begin{lemma}[Isotropic Vectors with Nonzero Coordinate Sum]
\label{lem:quadratic_count}
Let $q$ be odd with characteristic $p$, and let $n \geq 2$ be a positive integer.
The number $K(n,q)$ of vectors $x \in \mathbb{F}_q^n$ with $(x,x) = 0$ and $(x, \1) \neq 0$ is given by
$$
K(n,q) = \begin{cases}
q^{n-2}(q-1), & p \mid n, \\
q^{n-2}(q-1) - \chi((-1)^{(n-1)/2} n)(q-1)q^{(n-3)/2}, & p \nmid n,\ n \text{ odd,} \\
q^{n-2}(q-1) + \chi((-1)^{n/2})(q-1)q^{(n-2)/2}, & p \nmid n,\ n \text{ even,}
\end{cases}
$$
where $\chi$ is the quadratic character of $\mathbb{F}_q$.
\end{lemma}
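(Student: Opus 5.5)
The plan is to count by complement: $K(n,q) = N - N_0$. Here $N$ is the number of $x\in\FF_q^n$ with $(x,x)=0$, and $N_0$ is the number of such $x$ lying in the hyperplane $W := \1^{\perp} = \{x : (x,\1)=0\}$.

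For $N$ I will apply Theorem~\ref{thm:weil_lidl_niederreiter} to the form $x\mapsto (x,x)$, whose matrix is $I$ with $\det = 1$. This gives
\[
N = q^{n-1} \ \text{for $n$ odd}, \qquad N = q^{n-1}+\chi\bigl((-1)^{n/2}\bigr)(q-1)q^{(n-2)/2} \ \text{for $n$ even}.
\]
The remaining work is to identify the restriction of the standard form to $W$ and count its zeros, splitting on whether $p \mid n$. The key input is that $(\1,\1)=n$ in $\FF_q$, so $\1$ is anisotropic exactly when $p\nmid n$.

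\emph{Case $p\nmid n$.} Here $\FF_q^n = \langle \1\rangle \perp W$, so the form on $W$ is nondegenerate of dimension $n-1$. Taking determinants of Gram matrices gives $1 = n\cdot \disc(W)$ in $\FF_q^*/(\FF_q^*)^2$, hence $\disc(W)$ is the class of $n$.
\begin{itemize}
\item If $n$ is even, $\dim W$ is odd and Theorem~\ref{thm:weil_lidl_niederreiter} gives $N_0=q^{n-2}$.
\item If $n$ is odd, $N_0 = q^{n-2}+\chi\bigl((-1)^{(n-1)/2}n\bigr)(q-1)q^{(n-3)/2}$.
\end{itemize}
To apply Weil on $W$, I will fix a basis of $W$ and use its Gram matrix, whose determinant lies in the class of $n$ by the discriminant computation. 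Subtracting $N_0$ from $N$ gives the second and third lines of the statement directly.

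\emph{Case $p\mid n$.} Now $\1\in W$ is isotropic, and since $\1^\perp = W$, the radical of the form restricted to $W$ is exactly $\langle \1\rangle$. By Lemma~\ref{lem:hyperbolic_completion} there is a $v$ such that $H=\langle \1, v\rangle$ is a hyperbolic plane. Then $\FF_q^n = H\perp H^\perp$, with $H^\perp$ nondegenerate of dimension $n-2$. Because the total discriminant is $1$ and $\disc H = -1$, we get $\disc(H^\perp) = -1$.

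Next, $H^\perp \subseteq \1^\perp = W$ and $\1 \notin H^\perp$, so $W = \langle\1\rangle \oplus H^\perp$. For $x = t\1 + h$ with $h \in H^\perp$, we have $(x,x)=(h,h)$. Therefore $N_0 = q\cdot N(H^\perp)$, where $N(H^\perp)$ counts the zeros of the form on $H^\perp$.
\begin{itemize}
\item If $n$ is odd, $N(H^\perp)=q^{n-3}$, so $N_0=q^{n-2}$ and $K = q^{n-1}-q^{n-2}$.
\item If $n$ is even, Theorem~\ref{thm:weil_lidl_niederreiter} gives $N(H^\perp) = q^{n-3}+\chi\bigl((-1)^{(n-2)/2}\cdot(-1)\bigr)(q-1)q^{(n-4)/2}$. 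Here $(-1)^{(n-2)/2}\cdot(-1) = (-1)^{n/2}$, so $qN(H^\perp)$ has exactly the same correction term as $N$. These cancel and leave $K=q^{n-2}(q-1)$.
\end{itemize}

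The main obstacle is this degenerate case, specifically getting the discriminant of the $(n-2)$-dimensional nondegenerate part right so that the character terms cancel. The hyperbolic-plane splitting above is designed to make that bookkeeping clean. A minor point is that when $n=2$ and $p\mid n$, $H^\perp=0$; there I will check $N(H^\perp)=1$ by hand (the formula's convention $q^{-1}+(q-1)q^{-1}=1$ also works).
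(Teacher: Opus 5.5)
Your proof is correct and follows the paper's argument. Both count by complement ($K = N - N_0$), use Weil's theorem for $N$, use the splitting $\FF_q^n = \langle \1\rangle \perp W$ with $\disc(W) = n$ when $p \nmid n$, and, when $p \mid n$, reduce to an $(n-2)$-dimensional nondegenerate form of discriminant $-1$ whose zero count is multiplied by $q$.

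The only variation is in the $p \mid n$ case. You realize that $(n-2)$-dimensional piece as $H^\perp$ for a hyperbolic plane $H \ni \1$ and get $\disc = -1$ by multiplicativity, the same device the paper uses in Lemma~\ref{lem:hull_line_count}. The paper instead works on the quotient $W/\langle \1\rangle$ and computes the Gram matrix $I_{n-2}+J_{n-2}$, of determinant $n-1=-1$, in the basis $\overline{e_i-e_n}$.

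Your caveat about $n=2$ with $p \mid n$ is vacuous, since $p$ is odd.
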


\begin{proof}
We compute $K(n,q) = N(n,q) - M(n,q)$, where $N(n,q) = |\{x \in \FF_q^n : (x,x) = 0\}|$ and $M(n,q) = |\{x \in \FF_q^n : (x,x) = 0, (x,\1) = 0\}|$.

Applying Theorem~\ref{thm:weil_lidl_niederreiter} to $Q(x) = \sum_{i=1}^n x_i^2$ (which is nondegenerate with $\det(Q) = 1$) gives
$$
N(n,q) = \begin{cases}
q^{n-1} & \text{if } n \text{ is odd,} \\
q^{n-1} + \chi((-1)^{n/2})(q-1)q^{(n-2)/2} & \text{if } n \text{ is even.}
\end{cases}
$$

For $M(n,q)$, we count solutions on the hyperplane $H = \{x : (x,\1) = 0\}$ of dimension $n-1$. Since $H^\perp = \langle \1 \rangle$ and $(\1,\1) = n$, the radical of $Q|_H$ is $H \cap H^\perp = H \cap \langle \1 \rangle$, which is $\langle \1 \rangle$ if $p \mid n$ and $\{0\}$ if $p \nmid n$.

\emph{Case $p \nmid n$}: $Q|_H$ is nondegenerate of dimension $n-1$. To apply Weil's theorem, we need $\det(Q|_H)$. Since $\FF_q^n = H \perp \langle \1 \rangle$ and $Q(\1) = n$, we have $\det(Q) = \det(Q|_H) \cdot n \cdot d^2$ for some nonzero $d$, so $\chi(\det(Q|_H)) = \chi(\det(Q)) \cdot \chi(n)^{-1} = \chi(n)$. Thus
$$
M(n,q) = \begin{cases}
q^{n-2} & \text{if } n \text{ is even,} \\
q^{n-2} + \chi((-1)^{(n-1)/2} n)(q-1)q^{(n-3)/2} & \text{if } n \text{ is odd.}
\end{cases}
$$

\emph{Case $p \mid n$}: $\1 \in H$ and $Q|_H$ has radical $\langle \1 \rangle$. The quotient form $\bar{Q}$ on $\bar{H} = H/\langle \1 \rangle$ is nondegenerate of dimension $n-2$. Since $Q(x + t\1) = Q(x)$ for all $x \in H$ and $t \in \FF_q$, each zero of $\bar{Q}$ lifts to $q$ zeros in $H$. In the basis $\{\bar{g}_i = \overline{e_i - e_n} : i = 1, \ldots, n-2\}$ of $\bar{H}$, the matrix of $\bar{Q}$ is $I_{n-2} + J_{n-2}$ (diagonal entries~$2$, off-diagonal entries~$1$). By the determinant formula in Remark~\ref{rem:nondegeneracy} applied with $a = b = 1$ and dimension $n-2$, $\det(I_{n-2} + J_{n-2}) = 1^{n-3}(1 + (n-2)) = n - 1 = -1$ in $\FF_q$ (using $p \mid n$). Applying Weil's theorem and multiplying by $q$:
$$
M(n,q) = \begin{cases}
q^{n-2} & \text{if } n \text{ is odd,} \\
q^{n-2} + \chi((-1)^{n/2})(q-1)q^{(n-2)/2} & \text{if } n \text{ is even,}
\end{cases}
$$
where we used $\chi((-1)^{(n-2)/2} \cdot (-1)) = \chi((-1)^{n/2})$.

Subtracting $M(n,q)$ from $N(n,q)$: when $p \mid n$, $K = N - M = q^{n-2}(q-1)$ regardless of parity (in the even case, the $\chi$ terms cancel). The remaining cases yield the stated formula.
\end{proof}

To count hull-dimension-1 codes, we fix a hull line $\langle x \rangle$ and enumerate codes with that hull. The structure of the residual bilinear form on $V = x^\perp / \langle x \rangle$ governs this count.

\begin{lemma}[Hull-Line Counting]\label{lem:hull_line_count}
Let $x \in \FF_q^n$ satisfy $(x,x) = 0$ and $(x,\1) \neq 0$. The number of $[n,k]_q$ codes $C$ with $\Hull(C) = \langle x \rangle$ equals $L^{\chi(-1)}(n-2,\, k-1,\, q)$, where $L^\epsilon(n,k,q)$ for $\epsilon \in \{+1, -1\}$ denotes the number of $k$-dimensional subspaces of $\FF_q^n$ that are LCD with respect to a nondegenerate symmetric bilinear form of type~$\epsilon$ (Appendix~\ref{app:type_epsilon}).
\end{lemma}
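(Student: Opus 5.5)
We plan to identify codes with hull $\langle x\rangle$ with LCD subspaces of the residual space $V = x^\perp/\langle x\rangle$, where $x^\perp$ is taken for the standard inner product. The hypothesis $(x,x)=0$ gives $x \in x^\perp$, so $V$ has dimension $n-2$. It carries the induced form $\bar B(\bar u,\bar v) := (u,v)$. This form is well defined because $x$ is orthogonal to all of $x^\perp$. It is nondegenerate because the radical of $(\cdot,\cdot)$ restricted to $x^\perp$ is $x^\perp \cap (x^\perp)^\perp = x^\perp\cap\langle x\rangle = \langle x\rangle$.

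\textbf{Step 1 (reduction to subspaces of $x^\perp$ containing $x$).} If $\Hull(C)=\langle x\rangle$, then $x\in C$. Moreover $x\in C^\perp$, which is equivalent to $C\subseteq x^\perp$. Hence every code counted lies in the set of $k$-dimensional subspaces $C$ with $\langle x\rangle\subseteq C\subseteq x^\perp$. The map $C\mapsto \bar C := C/\langle x\rangle$ is a bijection from this set onto the $(k-1)$-dimensional subspaces of $V$.

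\textbf{Step 2 (hull correspondence).} For such $C$, the radical of $\bar B|_{\bar C}$ is
\[
\{\bar c : (c,c')=0 \text{ for all } c'\in C\} = (C\cap C^\perp)/\langle x\rangle = \Hull(C)/\langle x\rangle .
\]
This is well defined because $x\in C\cap C^\perp$. Hence $\Hull(C)=\langle x\rangle$ if and only if $\bar C$ is LCD with respect to $\bar B$. The count therefore equals the number of $(k-1)$-dimensional $\bar B$-LCD subspaces of the $(n-2)$-dimensional space $V$. This number depends only on the isometry class of $(V,\bar B)$, i.e.\ on its discriminant (for $q$ odd).

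\textbf{Step 3 (type of the residual form).} By Lemma~\ref{lem:hyperbolic_completion}, there is $y$ such that $P=\langle x,y\rangle$ is a hyperbolic plane. Then $\FF_q^n = P\perp P^\perp$. The projection $x^\perp\to V$ restricted to $P^\perp$ is an isometry $P^\perp \cong V$: indeed $P^\perp\subseteq x^\perp$ has dimension $n-2$ and meets $\langle x\rangle$ trivially, since $P$ is nondegenerate. Taking determinants in an adapted basis gives
\[
\disc(I_n) = \disc(P)\cdot\disc(P^\perp) = (-1)\cdot\disc(V).
\]
So $\chi(\det \bar B) = \chi(-1)\chi(1) = \chi(-1)$, and $V$ is of type $\chi(-1)$. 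This yields $L^{\chi(-1)}(n-2,k-1,q)$.

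The hypothesis $(x,\1)\neq 0$ plays no role in the count. It only guarantees that these codes are the GI-reducible ones of Theorem~\ref{thm:gi_reducible_count}.

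The main obstacle is matching conventions in Step 3. We must check that ``type $\epsilon$'' in Appendix~\ref{app:type_epsilon} is defined by $\epsilon = \chi(\det)$ of the Gram matrix, and not by $\chi((-1)^{m/2}\det)$ as in Theorem~\ref{thm:weil_lidl_niederreiter}. Under the latter convention, the type of $V$ would be $\chi\bigl((-1)^{(n-2)/2}\cdot(-1)\bigr) = \chi\bigl((-1)^{n/2}\bigr)$ rather than $\chi(-1)$. We will therefore first verify that the discriminant computation above matches the appendix's definition.
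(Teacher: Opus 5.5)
Your proposal is correct and follows the paper's proof essentially step for step. Both arguments pass to $V=x^\perp/\langle x\rangle$, biject codes containing $x$ inside $x^\perp$ with subspaces of $V$, match $\Hull(C)/\langle x\rangle$ with the radical of $\bar C$, and split off a hyperbolic plane to get $\disc(V)=-1$. The convention check you flag comes out in your favor: Appendix~\ref{app:type_epsilon} defines type via $M^\epsilon=\mathrm{diag}(1,\ldots,1,\tau_\epsilon)$, i.e.\ $\epsilon=\chi(\det)$, so $V$ indeed has type $\chi(-1)$.
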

\begin{proof}
The standard inner product restricted to $x^\perp$ has radical $\langle x \rangle$ (since $(x,x)=0$), inducing a nondegenerate form on the quotient $V := x^\perp / \langle x \rangle$ of dimension $n - 2$. By Lemma~\ref{lem:hyperbolic_completion}, we may choose $v$ such that $\{x, v\}$ forms a hyperbolic plane: $(x, x) = (v, v) = 0$ and $(x, v) = 1$. Since $(x, v) = 1$ implies $x \notin \langle x, v\rangle^\perp$, the subspace $\langle x, v\rangle^\perp$ is a section of the projection $x^\perp \twoheadrightarrow V$, on which the inherited form coincides with the form on~$V$. The orthogonal decomposition $\FF_q^n = \langle x, v\rangle \perp \langle x, v\rangle^\perp$ together with $\disc(\langle x, v\rangle) = -1$ then yields, by multiplicativity of discriminants,
\[
\disc(V) = \disc(\FF_q^n)/\disc(\langle x, v \rangle) = -1 \quad \text{in } \FF_q^*/(\FF_q^*)^2.
\]

A code $C$ with $\Hull(C) = \langle x \rangle$ satisfies $\langle x \rangle \subseteq C \subseteq x^\perp$ (the first inclusion since $x \in C$; the second since $x \in \Hull(C) \subseteq C^\perp$ implies $C \subseteq x^\perp$). Mapping $C \mapsto \bar C := C/\langle x \rangle \subseteq V$ gives a bijection between codes $C$ with $\langle x \rangle \subseteq C \subseteq x^\perp$ and subspaces $\bar C \subseteq V$, with $\dim \bar C = k - 1$.

Under this bijection, $\Hull(\bar C)$ in $V$ equals $\Hull(C)/\langle x \rangle$: for $u \in C$, the class $\bar u \in \bar C$ lies in $\Hull(\bar C)$ if and only if $(u, c) = 0$ for all $c \in C$ (the induced form is well-defined modulo the radical $\langle x \rangle$), which is equivalent to $u \in C^\perp$, and hence to $u \in \Hull(C)$. Therefore, $\Hull(C) = \langle x \rangle$ if and only if $\bar C$ is LCD on $V$. Since $V$ has discriminant $-1$ in $\FF_q^*$, its isometry type is $\epsilon = \chi(-1)$, and the count of such $\bar C$ is $L^{\chi(-1)}(n-2, k-1, q)$.
\end{proof}

\begin{theorem}[Complete Enumeration of GI-Reducible Codes]\label{thm:enumeration_main}
Let $q$ be odd, and let $n, k$ be positive integers with $1 \leq k < n$. The number of GI-reducible $[n,k]_q$ codes is
\[
L(n,k,q) + \frac{K(n,q)}{q-1} \cdot L^{\chi(-1)}(n-2,\, k-1,\, q),
\]
where $L(n,k,q)$ counts standard LCD codes (Appendix~\ref{app:lcd_enumeration}), $K(n,q)$ counts isotropic vectors with nonzero coordinate sum (Lemma~\ref{lem:quadratic_count}), and $L^\epsilon(n,k,q)$ counts LCD subspaces with respect to a type-$\epsilon$ bilinear form (Appendix~\ref{app:type_epsilon}). The exponent $\chi(-1)$ equals $+1$ when $q \equiv 1 \pmod 4$ and $-1$ when $q \equiv 3 \pmod 4$.
\end{theorem}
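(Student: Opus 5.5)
The count will be assembled from Theorem~\ref{thm:gi_reducible_count}, which splits the GI-reducible $[n,k]_q$ codes into two disjoint families. The first family is the LCD codes, which contribute $L(n,k,q)$ by definition. The second family is the codes with $\dim \Hull(C) = 1$ whose hull generator $x$ satisfies $(x,\1)\neq 0$. The two families are disjoint because their hull dimensions differ, so the total is $L(n,k,q)$ plus the size of the second family. The whole task is therefore to count that second family.

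For the second family, I would partition codes according to their hull line $\Hull(C) = \langle x\rangle$. Each such code has exactly one hull line, so the count becomes a sum over lines. The condition on the hull vector is independent of the chosen generator, since $(\lambda x,\1) = \lambda(x,\1)$ and $(\lambda x,\lambda x)=\lambda^2 (x,x)$ for $\lambda\in\FF_q^*$. The relevant lines are therefore the lines $\langle x\rangle$ with $(x,x)=0$ and $(x,\1)\neq 0$. Every such line contains exactly $q-1$ nonzero vectors, all of which satisfy the two conditions. Such vectors are automatically nonzero because $(x,\1)\neq 0$. Hence the number of relevant lines is $K(n,q)/(q-1)$, with $K(n,q)$ as in Lemma~\ref{lem:quadratic_count}.

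For each fixed relevant line, Lemma~\ref{lem:hull_line_count} says the number of $[n,k]_q$ codes with hull exactly $\langle x\rangle$ is $L^{\chi(-1)}(n-2,k-1,q)$. The key point is that this number does not depend on which line is chosen. The residual space $x^\perp/\langle x\rangle$ always has discriminant $-1$, so its isometry type is always $\chi(-1)$. Summing this constant over the $K(n,q)/(q-1)$ lines gives the second term of the formula.

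The step needing the most care is the bookkeeping of degenerate parameters, since the formula must hold for all $1\le k<n$. When $k=1$, the quantity $L^{\epsilon}(n-2,0,q)$ should equal $1$: the zero subspace is LCD, and this corresponds to $C=\langle x\rangle$ itself. When $k-1 > n-2$, the count is $0$, which is consistent. The case $n=2$ needs the conventions for $L^\epsilon(0,\cdot,q)$ to match. Finally, the value of $\chi(-1)$ follows from the standard fact that $-1$ is a square in $\FF_q$ exactly when $q\equiv 1 \pmod 4$.
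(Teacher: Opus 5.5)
Your proposal is correct and follows essentially the same route as the paper: split via Theorem~\ref{thm:gi_reducible_count} into LCD codes and hull-dimension-$1$ codes, count the admissible hull lines as $K(n,q)/(q-1)$ by scalar invariance, and multiply by the line-independent count $L^{\chi(-1)}(n-2,k-1,q)$ from Lemma~\ref{lem:hull_line_count}. Your boundary remarks are a harmless addition, with two small corrections: the case $k-1>n-2$ cannot occur when $k<n$, and the edge case you did not flag is $k=n-1$, where $L^{\chi(-1)}(n-2,n-2,q)=1$ (the single code $C=x^\perp$) lies outside the appendix formula's range $k<n$ and must be read directly as a count.
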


\begin{proof}
By Theorem~\ref{thm:gi_reducible_count}, GI-reducible codes split into LCD codes (count $L(n,k,q)$) and codes with $\dim(\Hull(C)) = 1$ whose hull vector $x$ satisfies $(x,\1) \neq 0$. The valid hull lines $\langle x \rangle$ correspond to projective points: the set $\{x \in \FF_q^n : (x,x)=0,\, (x,\1) \neq 0\}$ has $K(n,q)$ elements, partitioned into $K(n,q)/(q-1)$ lines (each line contains $q-1$ generators in this set, since the conditions are scalar-invariant). For each line, Lemma~\ref{lem:hull_line_count} gives $L^{\chi(-1)}(n-2, k-1, q)$ codes with that hull.
\end{proof}

\begin{remark}
When $q \equiv 1 \pmod 4$, $\chi(-1) = +1$ and so $L^{\chi(-1)} = L^{+1} = L$, recovering Carlet's standard formula. When $q \equiv 3 \pmod 4$, $\chi(-1) = -1$ and $L^{-1}$ differs from $L$ only in the case $(k\text{ odd}, n\text{ even})$.
\end{remark}

Since codes with $\dim(\Hull(C)) \geq 2$ are never GI-reducible (Theorem~\ref{thm:gi_reducible_count}), this enumeration also provides a criterion for selecting parameters that avoid the orthogonal projector reduction in code-based cryptographic schemes.

\section{Reduction Algorithm}
\label{sec:algorithm}

Having characterized and enumerated GI-reducible codes, we now present the explicit reduction algorithm. The following lemma justifies that, when searching for a structure matrix~$M$ that makes both input codes $M$-LCD, we may restrict attention to a single~$M$ shared by the two codes.

\begin{lemma}[Permutation closure of $M$-LCD codes]
\label{lem:permutation_closure}
Let $M = aI + bJ$ be nondegenerate. If $C$ is $M$-LCD, then $CP$ is $M$-LCD for every permutation matrix~$P$.
\end{lemma}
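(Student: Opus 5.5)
The plan is to apply the rank criterion of Theorem~\ref{thm:generalized_massey} directly to a generator matrix of $CP$. Let $G$ be a generator matrix of $C$. Then $GP$ is a generator matrix of $CP$: its rows span $CP$, and it still has full row rank because $P$ is invertible. By Theorem~\ref{thm:generalized_massey}, $C$ is $M$-LCD exactly when $GMG^T$ is nonsingular. So it suffices to show that $(GP)M(GP)^T$ is nonsingular.

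Expanding gives $(GP)M(GP)^T = G(PMP^T)G^T$. The key step is the identity $PMP^T = M$ for $M = aI + bJ$. This was already noted in the proof of Theorem~\ref{thm:main_characterization}; it is also immediate from Lemma~\ref{lem:centralizer}, since $P^T$ is itself a permutation matrix. Concretely, we have $PIP^T = PP^T = I$ because permutation matrices are orthogonal. We also have $PJP^T = J$, because $J = \1^T\1$ and $P\1^T = \1^T$ and $\1 P^T = \1$ (each row and column of $P$ contains exactly one $1$). By linearity, $PMP^T = aI + bJ = M$.

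It follows that $(GP)M(GP)^T = GMG^T$, which is nonsingular by hypothesis. Hence $CP$ is $M$-LCD. Equivalently, by Lemma~\ref{lem:hull_dimension}, $\dim \Hull_M(CP) = k - \rank(GMG^T) = \dim \Hull_M(C) = 0$.

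The argument has no real obstacle. The only step needing care is the invariance $PMP^T = M$, and in particular the transpose convention: whether we conjugate by $P$ or by $P^T$ does not matter, since both are permutation matrices fixing $I$ and $J$.
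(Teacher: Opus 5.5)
Your proof is correct and is the same as the paper's: both use $PMP^T = M$ to get $(GP)M(GP)^T = G(PMP^T)G^T = GMG^T$, so the Gram matrix of $CP$ stays nonsingular (rank $k$). Your added checks, that $GP$ generates $CP$ and that $PJP^T = J$, are details the paper leaves implicit.
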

\begin{proof}
Since $PMP^T = M$, the generator matrix $GP$ of $CP$ satisfies
$$
(GP)M(GP)^T = G(PMP^T)G^T = GMG^T,
$$
so $\rank((GP)M(GP)^T) = \rank(GMG^T) = k$.
\end{proof}

Given two $[n,k]_q$ codes $C_1, C_2$ with generator matrices $G_1, G_2$, the algorithm proceeds as follows:
\begin{enumerate}
\item Check hull dimensions. Compute $\dim(\Hull(C_i))$ for $i=1,2$ (with respect to the standard inner product). If $\dim(\Hull(C_1)) \neq \dim(\Hull(C_2))$, then $C_1 \not\cong C_2$ (since $\Hull(CP) = \Hull(C)P$ for any permutation~$P$, the hull dimension is a permutation invariant). If $\dim(\Hull(C_i)) \geq 2$ for either $i$, then the codes are not GI-reducible by Theorem~\ref{thm:gi_reducible_count}, so return ``not GI-reducible''. If both codes are LCD, set $M = I$ and proceed to Step 3 (Bardet et al.'s construction). Otherwise, proceed to Step 2.

\item Enumerate candidate structure matrices. Since the orthogonal projector $\Pi_{C,M}$ is invariant under scaling $M \mapsto \lambda M$ ($\lambda \in \FF_q^*$), we may fix $a = 1$ and enumerate over $b \in \FF_q^*$ with $1 + nb \neq 0$. For each such $M = I + bJ$, test the LCD condition: compute $\rank(G_i M G_i^T)$ for $i=1,2$. If both equal $k$, then both codes are $M$-LCD; proceed to Step 3. Otherwise, try the next $b$. (In fact, by the proof of Theorem~\ref{thm:gi_reducible_count}, a single valid choice of $b$ suffices whenever both codes have hull dimension $1$ with $(x,\1) \neq 0$.)

\item Construct adjacency matrices. Compute the orthogonal projectors
\[
\Pi_i = M G_i^T (G_i M G_i^T)^{-1} G_i \quad (i=1,2).
\]
Note that $\Pi_i$ is generally not symmetric (unless $M = I$), so $\Pi_i$ represents a directed graph. As noted in Section~\ref{sec:preliminaries}, directed graph isomorphism reduces to GI in polynomial time.

\item Invoke GI solver. Run a graph isomorphism algorithm on $\Pi_1$ and $\Pi_2$. If isomorphic, return ``$C_1 \cong C_2$'' (since $\Pi_{C_1,M} \cong \Pi_{C_2,M}$ implies $C_1 \cong C_2$ by Theorem~\ref{thm:main_characterization}). Otherwise, return ``$C_1 \not\cong C_2$'' (since both codes are $M$-LCD for the same $M$, Theorem~\ref{thm:main_characterization} gives the contrapositive).

\item Exhaustive search. If no $M$ makes both codes $M$-LCD, then $C_1 \not\cong C_2$ (by Lemma~\ref{lem:permutation_closure}).
\end{enumerate}

The procedure is summarized as Algorithm~\ref{alg:gi_reduction}.

\begin{algorithm}[h]
\caption{Reduction from PEP to GI via permutation-invariant bilinear forms}
\label{alg:gi_reduction}
\begin{algorithmic}[1]
\REQUIRE Generator matrices $G_1, G_2 \in \FF_q^{k \times n}$ of $[n,k]_q$ codes $C_1, C_2$.
\ENSURE ``$C_1 \cong C_2$'', ``$C_1 \not\cong C_2$'', or ``not GI-reducible''.
\STATE $h_i \gets k - \rank(G_i G_i^T)$ for $i = 1, 2$ \hfill\COMMENT{hull dimensions}
\IF{$h_1 \neq h_2$}
  \STATE \textbf{return} ``$C_1 \not\cong C_2$''
\ENDIF
\IF{$h_1 \geq 2$}
  \STATE \textbf{return} ``not GI-reducible''
\ENDIF
\IF{$h_1 = 0$}
  \STATE $M \gets I$ \hfill\COMMENT{LCD case: Bardet et al.}
\ELSE
  \STATE find $b \in \FF_q^*$ with $1 + nb \neq 0$ and $\rank\bigl(G_i (I + bJ) G_i^T\bigr) = k$ for $i = 1, 2$
  \IF{no such $b$ exists}
    \STATE \textbf{return} ``$C_1 \not\cong C_2$'' \hfill\COMMENT{by Lemma~\ref{lem:permutation_closure}}
  \ENDIF
  \STATE $M \gets I + bJ$
\ENDIF
\STATE $\Pi_i \gets M G_i^T (G_i M G_i^T)^{-1} G_i$ for $i = 1, 2$
\STATE decide $\Pi_1 \cong \Pi_2$ as edge-weighted directed graphs via a GI solver
\IF{$\Pi_1 \cong \Pi_2$}
  \STATE \textbf{return} ``$C_1 \cong C_2$''
\ELSE
  \STATE \textbf{return} ``$C_1 \not\cong C_2$''
\ENDIF
\end{algorithmic}
\end{algorithm}

Complexity. Step 1 costs $O(k^2 n)$ for computing $GG^T$ and $O(k^3)$ for rank computation. Steps 2--4 enumerate at most $q$ values of $b$. For each value, computing $G_i M G_i^T$ costs $O(k^2 n)$, rank computation and matrix inversion cost $O(k^3)$, and constructing $\Pi_i$ costs $O(kn^2)$. Since $k \leq n$, each iteration costs $O(n^3)$, and invoking the GI solver adds $T_{\mathrm{GI}}(n)$. Summing over at most $q$ structure matrices, the total runtime is $O(q \cdot (n^3 + T_{\mathrm{GI}}(n)))$. For Babai's algorithm~\cite{babai2016}, $T_{\mathrm{GI}}(n) = n^{O(\log^c n)}$ for some constant $c$.

\section{Conclusion}
\label{sec:conclusion}

We established a complete characterization of codes reducible to Graph Isomorphism via the orthogonal projector method of Bardet et al.\ by generalizing the inner product. While the original reduction of Bardet et al.\ applies only to LCD codes using the standard inner product, we proved that replacing the standard inner product with a permutation-invariant bilinear form $M = aI + bJ$ extends the reduction to codes with hull dimension at most~$1$ when $q$ is odd, while in characteristic~$2$ only LCD codes are reducible. This characterization is complete within the orthogonal projector framework: the orthogonal projector gives a valid reduction if and only if the bilinear form has this structure, and codes with hull dimension at least~$2$ cannot be reduced by this method. We derived exact enumeration formulas using Weil's theorem and provided an explicit polynomial-time reduction algorithm.

An open question is whether reduction strategies other than orthogonal projectors can map PEP for codes with $\dim(\Hull(C)) \geq 2$ to GI, or whether such reductions are provably impossible.

\section*{Acknowledgements}
K.~Ishizuka was supported by JSPS KAKENHI Grant Number JP25K17290.

\appendix

\section{LCD Code Enumeration}
\label{app:lcd_enumeration}

This appendix presents the mass formulas for LCD codes, which are central to counting GI-reducible codes. We first review the historical development of LCD code enumeration, then provide explicit formulas separated by field characteristic.

\subsection{Historical Development}

The study of LCD codes was initiated by Massey~\cite{massey1992linear}, who introduced the concept and proved that an $[n,k]$ code $C$ is LCD if and only if its generator matrix $G$ satisfies $\det(GG^T) \neq 0$. Massey's work established the foundational characterization but did not address enumeration.

The first mass formulas for LCD codes were obtained by Carlet et al.~\cite{carlet2019new}, who used the action of the orthogonal group to derive closed-form counting formulas for LCD codes over $\mathbb{F}_2$ and over finite fields of odd characteristic. Their approach exploited the structure of the orthogonal group and its orbits on the Grassmannian of linear codes.

For fields of even characteristic beyond $\mathbb{F}_2$, Li et al.~\cite{lishiling2025} derived mass formulas for linear codes with arbitrary prescribed hull dimension $l$ over finite fields of characteristic 2. In particular, for LCD codes (the case $l=0$), their result extends Carlet's formula from $q=2$ to arbitrary $q=2^m$ by observing that the key structural properties of Carlet's construction generalize naturally to all fields of characteristic 2.

\subsection{Mass Formulas for Odd Characteristic}
\label{app:mass_odd}

Let $q = p^m$ where $p$ is an odd prime. Let $\left[\begin{smallmatrix} n \\ k \end{smallmatrix}\right]_q$ denote the Gaussian binomial coefficient, and let $\chi$ denote the quadratic character of $\mathbb{F}_q$. The number $L(n,k,q)$ of LCD codes of type $[n,k]_q$ is given by Carlet et al.~\cite[Corollary~32]{carlet2019new}:

$$
L(n,k,q) = \begin{cases}
q^{\frac{k(n-k)-1}{2}} \left( q^{\frac{n}{2}} - \chi((-1)^{\frac{n}{2}}) \right) \left[\begin{smallmatrix} \frac{n}{2}-1 \\ \frac{k-1}{2} \end{smallmatrix}\right]_{q^2}, & k \text{ odd, } n \text{ even,} \\[6pt]
q^{\frac{(k+1)(n-k)}{2}} \left[\begin{smallmatrix} \frac{n-1}{2} \\ \frac{k-1}{2} \end{smallmatrix}\right]_{q^2}, & k \text{ odd, } n \text{ odd,} \\[6pt]
q^{\frac{k(n-k+1)}{2}} \left[\begin{smallmatrix} \frac{n-1}{2} \\ \frac{k}{2} \end{smallmatrix}\right]_{q^2}, & k \text{ even, } n \text{ odd,} \\[6pt]
q^{\frac{k(n-k)}{2}} \left[\begin{smallmatrix} \frac{n}{2} \\ \frac{k}{2} \end{smallmatrix}\right]_{q^2}, & k \text{ even, } n \text{ even.}
\end{cases}
$$

\subsection{Mass Formulas for Even Characteristic}

Let $q = 2^m$. Carlet et al.~\cite[Corollary~17]{carlet2019new} derived mass formulas for LCD codes over $\mathbb{F}_2$. Li et al.~\cite[Theorem~2 and subsequent remark]{lishiling2025} showed that Carlet's result for LCD codes over $\mathbb{F}_2$ can be naturally extended to LCD codes over $\mathbb{F}_{2^m}$. The number $L(n,k,q)$ of LCD codes of type $[n,k]_q$ depends on the parity of $n$ and $k$:

$$
L(n,k,q) = \begin{cases}
q^{\frac{nk-k^2+n-1}{2}} \left[\begin{smallmatrix} \frac{n}{2}-1 \\ \frac{k-1}{2} \end{smallmatrix}\right]_{q^2}, & k \text{ odd, } n \text{ even,} \\[6pt]
q^{\frac{(n-k)(k+1)}{2}} \left[\begin{smallmatrix} \frac{n-1}{2} \\ \frac{k-1}{2} \end{smallmatrix}\right]_{q^2}, & k \text{ odd, } n \text{ odd,} \\[6pt]
q^{\frac{k(n-k+1)}{2}} \left[\begin{smallmatrix} \frac{n-1}{2} \\ \frac{k}{2} \end{smallmatrix}\right]_{q^2}, & k \text{ even, } n \text{ odd,} \\[6pt]
q^{\frac{k(n-k)}{2}} \left( q^{n-k} \left[\begin{smallmatrix} \frac{n}{2}-1 \\ \frac{k}{2}-1 \end{smallmatrix}\right]_{q^2} + \left[\begin{smallmatrix} \frac{n}{2}-1 \\ \frac{k}{2} \end{smallmatrix}\right]_{q^2} \right), & k \text{ even, } n \text{ even.}
\end{cases}
$$
\section{Mass Formulas for Type-$\epsilon$ Forms}
\label{app:type_epsilon}

\subsection{Motivation}

By the classification of nondegenerate symmetric bilinear forms over $\FF_q$ ($q$ odd) recalled in Section~\ref{sec:preliminaries}, every such form on $\FF_q^n$ is isometric to exactly one of
\[
M^{+1} := \mathrm{diag}(1, 1, \ldots, 1, 1) \quad \text{or} \quad M^{-1} := \mathrm{diag}(1, 1, \ldots, 1, \gamma),
\]
where $\gamma \in \FF_q^*$ is a fixed non-square. We refer to $\epsilon \in \{+1, -1\}$ as the \emph{type} of the form, and adopt $M^\epsilon$ as the structure matrix throughout this appendix, in line with the $M$-LCD terminology of the main text (Section~\ref{sec:extension}).

Carlet et al.~\cite{carlet2019new} derive their LCD code mass formulas under the implicit assumption that the ambient bilinear form is the standard inner product, i.e., type~$\epsilon = +1$. However, in the Hull-Line Counting Lemma (Lemma~\ref{lem:hull_line_count}), the form induced on the quotient $V = x^\perp / \langle x \rangle$ has discriminant $-1$, which corresponds to type $\epsilon = \chi(-1)$: namely $\epsilon = +1$ when $q \equiv 1 \pmod 4$ and $\epsilon = -1$ when $q \equiv 3 \pmod 4$. To handle the latter case, we generalize Carlet's mass formulas from $\epsilon = +1$ to $\epsilon \in \{+1, -1\}$.

\subsection{Setup and Diagonalisation}

For $\epsilon \in \{+1, -1\}$, with $M^\epsilon$ as above, a code $C$ of length $n$ is \emph{$M^\epsilon$-LCD} (in the sense of Section~\ref{sec:extension}) if $\det(G M^\epsilon G^T) \neq 0$ for some (any) generator matrix~$G$. Define
\[
L^\epsilon(n, k, q) := \bigl|\{C \subseteq \FF_q^n : \dim C = k,\ C \text{ is } M^\epsilon\text{-LCD}\}\bigr|.
\]
The case $\epsilon = +1$ recovers the standard LCD count $L(n, k, q)$ of Appendix~\ref{app:lcd_enumeration}.

A two-fold classification is at play: while $M^\epsilon$ itself has type~$\epsilon$, each $M^\epsilon$-LCD code~$C$ has its own \emph{inner discriminant class} $\delta_C \in \{1, \gamma\}$, defined by the property that some generator matrix~$G$ satisfies $G M^\epsilon G^T = \mathrm{diag}(1, \ldots, 1, \delta_C)$ (existence by~\cite[Theorem~25]{carlet2019new}). Correspondingly, we partition
\[
L^\epsilon(n, k, q) = |\mathrm{LCD}^\epsilon_+[n, k]_q| + |\mathrm{LCD}^\epsilon_-[n, k]_q|,
\]
where the subscript indicates $\delta_C = 1$ or $\delta_C = \gamma$, respectively.

\subsection{Reference Codes and Orbit Decomposition}

Let $1 \leq k < n$ and assume $M^\epsilon = \mathrm{diag}(1, \ldots, 1, \tau_\epsilon)$ with $\tau_{+1} := 1$, $\tau_{-1} := \gamma$. Denote the standard basis of $\FF_q^n$ by $e_1, \ldots, e_n$. Recall (\cite[Lemma~26]{carlet2019new}; also a direct counting argument since the squares form a subgroup of index $2$ in $\FF_q^*$ for $q$ odd) that every element of $\FF_q$ is a sum of two squares; fix $a, b \in \FF_q$ with $a^2 + b^2 = \gamma$.

\begin{lemma}[Reference codes]\label{lem:reference_codes}
Let $1 \leq k < n$ and $\epsilon \in \{+1, -1\}$, with $M^\epsilon$ as above. Then:
\begin{enumerate}
\item[(i)] The code $\mathcal{C}_+^\epsilon := \langle e_1, \ldots, e_k\rangle$, generated by $G_+^\epsilon := (I_k \mid 0_{k \times (n-k)})$, is $M^\epsilon$-LCD with inner discriminant $\delta_+ = 1$.
\item[(ii)] Define $v \in \FF_q^n$ by
\begin{itemize}
\item $v := a\, e_k + b\, e_{k+1}$ if $k \leq n-2$;
\item $v := a\, e_{n-1} + b\, e_n$ if $k = n-1$ and $\epsilon = +1$;
\item $v := e_n$ if $k = n-1$ and $\epsilon = -1$.
\end{itemize}
The code $\mathcal{C}_-^\epsilon := \langle e_1, \ldots, e_{k-1}, v\rangle$, generated by the matrix $G_-^\epsilon$ with rows $e_1, \ldots, e_{k-1}, v$, is $M^\epsilon$-LCD with inner discriminant $\delta_- = \gamma$.
\end{enumerate}
\end{lemma}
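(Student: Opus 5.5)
The plan is to compute the Gram matrix $G M^\epsilon G^T$ directly for each reference generator matrix. For each code we check two things: that the Gram matrix is nonsingular, which gives $M^\epsilon$-LCD by Theorem~\ref{thm:generalized_massey}, and that its determinant lies in the right square class. The inner discriminant $\delta_C$ is determined by the square class of $\det(GM^\epsilon G^T)$. This is because changing the generator matrix $G \mapsto AG$ with $A\in\mathrm{GL}_k$ multiplies the determinant by $\det(A)^2$. Moreover, a nondegenerate form on a $k$-dimensional space is diagonalisable as $\mathrm{diag}(1,\ldots,1,\delta)$ with $\delta$ its discriminant, by Lemma~\ref{lem:orth_basis} together with the isometry classification by discriminant. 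So it suffices to show $\chi(\det)=1$ for $\mathcal{C}_+^\epsilon$ and $\chi(\det)=-1$ for $\mathcal{C}_-^\epsilon$, using that $\gamma$ is a non-square.

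For (i), $G_+^\epsilon M^\epsilon (G_+^\epsilon)^T$ is the top-left $k\times k$ block of $M^\epsilon$. Since $k<n$, this block never contains the entry $\tau_\epsilon$ in position $(n,n)$, so it equals $I_k$. Its determinant is $1$, hence the code is LCD with $\delta_+=1$.

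For (ii), the rows $e_1,\ldots,e_{k-1}$ are mutually orthogonal with norm $1$, since all indices are $<n$. We then split into the three definitions of $v$.

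\textbf{Case $k\le n-2$.} Here $v$ is supported on $\{k,k+1\}$, both of which are $\le n-1$. So $v$ is orthogonal to $e_1,\ldots,e_{k-1}$ and $vM^\epsilon v^T=a^2+b^2=\gamma$. The Gram matrix is $\mathrm{diag}(1,\ldots,1,\gamma)$.

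\textbf{Case $k=n-1$, $\epsilon=+1$.} Now $M^{+1}=I$ and $v=ae_{n-1}+be_n$. Since $k-1=n-2$, $v$ is orthogonal to $e_1,\ldots,e_{n-2}$, and $vv^T=a^2+b^2=\gamma$. The Gram matrix is again $\mathrm{diag}(1,\ldots,1,\gamma)$.

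\textbf{Case $k=n-1$, $\epsilon=-1$.} Here $v=e_n$ is orthogonal to $e_1,\ldots,e_{n-2}$ and $e_nM^{-1}e_n^T=\gamma$. The Gram matrix is once more $\mathrm{diag}(1,\ldots,1,\gamma)$.

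In every case we must also confirm that the rows are linearly independent, so that the code has dimension $k$. This holds because $v$ has a nonzero coordinate outside $\{1,\ldots,k-1\}$. For $v=ae_k+be_{k+1}$, $a$ and $b$ are not both zero, since $a^2+b^2=\gamma\neq0$.

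The only delicate point is the boundary case $k=n-1$. There the index $k+1=n$ collides with the position of $\tau_\epsilon$, which is why $v$ is redefined. Checking that $vM^\epsilon v^T$ is exactly $\gamma$ in each sub-case, and not $a^2+\gamma b^2$, is the step to verify carefully. Everything else is a routine diagonal computation.
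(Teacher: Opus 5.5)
Your proposal is correct and matches the paper's proof: both compute the Gram matrices directly, getting $I_k$ for $\mathcal{C}_+^\epsilon$ and $\mathrm{diag}(1,\ldots,1,\gamma)$ for $\mathcal{C}_-^\epsilon$ in each of the three cases for $v$. Your additional checks, namely linear independence of the rows and the fact that $\delta_C$ depends only on the square class of $\det(GM^\epsilon G^T)$, are valid; the paper leaves them implicit, since a nonsingular Gram matrix already forces independence and $\delta_C$ is defined by exhibiting a generator matrix of the form $\mathrm{diag}(1,\ldots,1,\delta_C)$.
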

\begin{proof}
For (i), since $k \leq n-1$, none of the rows of $G_+^\epsilon$ touches the last column of $M^\epsilon$, so $G_+^\epsilon M^\epsilon (G_+^\epsilon)^T = I_k$, which has determinant $1 \neq 0$. Hence $\mathcal{C}_+^\epsilon$ is $M^\epsilon$-LCD with $\delta_+ = 1$.

For (ii), $v$ is orthogonal (under $M^\epsilon$) to $e_1, \ldots, e_{k-1}$ by construction. The value $v M^\epsilon v^T$ equals $a^2 + b^2 = \gamma$ in the first two cases, and $\tau_{-1} = \gamma$ in the third. Hence $G_-^\epsilon M^\epsilon (G_-^\epsilon)^T = \mathrm{diag}(1, \ldots, 1, \gamma)$, which has determinant $\gamma \neq 0$, so $\mathcal{C}_-^\epsilon$ is $M^\epsilon$-LCD with $\delta_- = \gamma$.
\end{proof}

For $\delta \in \FF_q^*$, define the orthogonal group
\[
\mathbb{O}_m^\delta(q) := \{Q \in \mathrm{GL}_m(\FF_q) : Q \mathrm{diag}(1, \ldots, 1, \delta) Q^T = \mathrm{diag}(1, \ldots, 1, \delta)\}.
\]
The parameter $\delta$ serves as a generic discriminant placeholder; below it will be instantiated as $1$, $\gamma$, $\epsilon$, or $\epsilon\gamma$ depending on context. In particular, $\mathbb{O}_n^\epsilon(q)$ is the orthogonal group preserving $M^\epsilon$, and its transitivity on each $\mathrm{LCD}^\epsilon_\pm[n, k]_q$ follows directly from Witt's extension theorem.

\begin{lemma}[Transitivity via Witt extension]\label{lem:witt_transitivity}
For $\epsilon \in \{+1, -1\}$ and a fixed inner discriminant class $\delta \in \{1, \gamma\}$, the group $\mathbb{O}_n^\epsilon(q)$ acts transitively on the set of $M^\epsilon$-LCD codes of dimension $k$ with inner discriminant $\delta$.
\end{lemma}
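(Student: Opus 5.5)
The plan is to reduce transitivity to Witt's extension theorem (Theorem~\ref{thm:witt_extension}) by building an explicit isometry between any two codes of the same inner discriminant. Fix $\epsilon$ and $\delta \in \{1,\gamma\}$, and let $C_1, C_2$ be two $M^\epsilon$-LCD codes of dimension $k$ with inner discriminant $\delta$. By the definition of inner discriminant, there are generator matrices $G_1, G_2$ with
\[
G_1 M^\epsilon G_1^T = G_2 M^\epsilon G_2^T = \mathrm{diag}(1,\ldots,1,\delta).
\]
Writing $g_i^{(1)}, g_i^{(2)}$ for the rows, define the linear map $\phi: C_1 \to C_2$ by $g_i^{(1)} \mapsto g_i^{(2)}$. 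It is a bijection, since both row sets are bases. It is also an isometry of the restricted forms $M^\epsilon|_{C_1}$ and $M^\epsilon|_{C_2}$, because the two Gram matrices coincide.

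Next, apply Theorem~\ref{thm:witt_extension}, valid since $q$ is odd and $M^\epsilon$ is nondegenerate, to extend $\phi$ to an isometry $\Phi$ of $(\FF_q^n, M^\epsilon)$. In the row-vector convention, $\Phi$ is given by right multiplication by a matrix $Q$ with $Q M^\epsilon Q^T = M^\epsilon$. Then $G_1 Q = G_2$, so $C_1 Q = C_2$. Conversely, any $Q \in \mathbb{O}_n^\epsilon(q)$ preserves the set in question: $(GQ)M^\epsilon(GQ)^T = GM^\epsilon G^T$, so both $M^\epsilon$-LCD-ness and the Gram matrix $\mathrm{diag}(1,\ldots,1,\delta)$ are preserved. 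Hence $\mathbb{O}_n^\epsilon(q)$ acts on this set, and by the previous step it acts transitively.

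The main subtlety is whether the inner discriminant $\delta_C$ is well defined. Existence of the normal form is taken from~\cite[Theorem~25]{carlet2019new}. Uniqueness of the class follows because a change of generator matrix $G \mapsto AG$ multiplies $\det(GM^\epsilon G^T)$ by $\det(A)^2$, so $\delta_C$ is determined modulo squares. The hypothesis that $C_1$ and $C_2$ share $\delta$ is therefore meaningful and is exactly what makes the two Gram matrices equal. Finally, I would check that Theorem~\ref{thm:witt_extension} is stated for arbitrary subspaces, including nondegenerate ones such as the $C_i$ here, and that the matrix convention for isometries matches the paper's definition of $\mathbb{O}_n^\delta(q)$.
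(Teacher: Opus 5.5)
Your proof is correct and follows essentially the same route as the paper: you produce an isometry $C_1 \to C_2$ of the restricted forms and extend it to an element of $\mathbb{O}_n^\epsilon(q)$ by Witt's extension theorem. The differences are minor. You read the isometry off the normal-form generator matrices with equal Gram matrix $\mathrm{diag}(1,\ldots,1,\delta)$, whereas the paper obtains it from the classification of nondegenerate forms by discriminant. You also verify explicitly that $\mathbb{O}_n^\epsilon(q)$ preserves the set, which the paper leaves implicit.
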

\begin{proof}
Let $C_1, C_2$ be two such codes. Each $C_i$ is equipped with the nondegenerate symmetric bilinear form $M^\epsilon|_{C_i}$ on a $k$-dimensional space, of common discriminant class $\delta$. By the classification of nondegenerate symmetric bilinear forms over $\FF_q$ ($q$ odd, Section~\ref{sec:preliminaries}), two such forms on a fixed-dimensional space are isometric if and only if they have equal discriminant class; hence there exists a linear isomorphism $\phi: C_1 \to C_2$ with $\phi(x) M^\epsilon \phi(y)^T = x M^\epsilon y^T$ for all $x, y \in C_1$. By Witt's extension theorem (Theorem~\ref{thm:witt_extension}) applied to $(\FF_q^n, M^\epsilon)$, $\phi$ extends to an element $\Phi \in \mathbb{O}_n^\epsilon(q)$, and $\Phi(C_1) = C_2$.
\end{proof}

By Lemma~\ref{lem:witt_transitivity} applied to the reference codes $\mathcal{C}_\pm^\epsilon$ of Lemma~\ref{lem:reference_codes}, $\mathrm{LCD}^\epsilon_\pm[n, k]_q = \mathcal{C}_\pm^\epsilon \cdot \mathbb{O}_n^\epsilon(q)$, so by orbit--stabilizer,
\begin{equation}\label{eq:orbit_stab}
|\mathrm{LCD}^\epsilon_\pm[n, k]_q| = \frac{|\mathbb{O}_n^\epsilon(q)|}{|\mathrm{St}(\mathcal{C}_\pm^\epsilon)|}.
\end{equation}

\subsection{Stabilisers and Orthogonal Group Orders}

By~\cite[Corollary~30]{carlet2019new}, the stabiliser of $\mathcal{C}_\pm^\epsilon$ is block-diagonal in the basis adapted to $\mathcal{C}_\pm^\epsilon \perp_{M^\epsilon} (\mathcal{C}_\pm^\epsilon)^{\perp_{M^\epsilon}}$:
\begin{align*}
|\mathrm{St}(\mathcal{C}_+^\epsilon)| &= |\mathbb{O}_k^1(q)| \cdot |\mathbb{O}_{n-k}^{\epsilon}(q)|, \\
|\mathrm{St}(\mathcal{C}_-^\epsilon)| &= |\mathbb{O}_k^\gamma(q)| \cdot |\mathbb{O}_{n-k}^{\epsilon\gamma}(q)|,
\end{align*}
where the bottom-block discriminants come from the multiplicativity $\delta_C \cdot \delta_{C^\perp} \equiv \epsilon \pmod{(\FF_q^*)^2}$. The order of $\mathbb{O}_m^\delta(q)$ is given by~\cite[Eqs.~(11)--(12)]{carlet2019new} (citing Weyl~\cite{weyl1939classical}): for $n$ odd,
\[
|\mathbb{O}_n^\delta(q)| = 2 q^{(n-1)^2/4} \prod_{i=1}^{(n-1)/2}(q^{2i} - 1) \quad (\text{independent of } \delta),
\]
and for $n$ even,
\[
|\mathbb{O}_n^\delta(q)| = 2 q^{n(n-2)/4} \bigl(q^{n/2} - \chi((-1)^{n/2}\delta)\bigr) \prod_{i=1}^{n/2 - 1}(q^{2i} - 1).
\]

\subsection{Mass Formula}

Substituting orders into~\eqref{eq:orbit_stab},
\begin{equation}\label{eq:LB_pm}
\begin{aligned}
|\mathrm{LCD}^\epsilon_+[n,k]_q| &= \frac{|\mathbb{O}_n^\epsilon(q)|}{|\mathbb{O}_k^1(q)| \cdot |\mathbb{O}_{n-k}^{\epsilon}(q)|}, \\
|\mathrm{LCD}^\epsilon_-[n,k]_q| &= \frac{|\mathbb{O}_n^\epsilon(q)|}{|\mathbb{O}_k^\gamma(q)| \cdot |\mathbb{O}_{n-k}^{\epsilon\gamma}(q)|}.
\end{aligned}
\end{equation}

\begin{theorem}[Mass formula for $L^\epsilon$]\label{thm:Lepsilon}
For $\epsilon \in \{+1, -1\}$, $q$ odd, and $1 \leq k < n$,
\[
L^\epsilon(n, k, q) = \begin{cases}
q^{\frac{k(n-k)-1}{2}}\bigl(q^{n/2} - \chi((-1)^{n/2}\epsilon)\bigr)\left[\begin{smallmatrix} \frac{n}{2}-1 \\ \frac{k-1}{2} \end{smallmatrix}\right]_{q^2}, & k \text{ odd, } n \text{ even,} \\[6pt]
L(n, k, q), & \text{otherwise.}
\end{cases}
\]
In particular, $L^{+1}(n, k, q) = L(n, k, q)$ recovers Carlet's formula~\cite[Corollary~32]{carlet2019new}.
\end{theorem}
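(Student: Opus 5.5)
The plan is to substitute the orthogonal group orders into \eqref{eq:LB_pm} and add the two contributions $|\mathrm{LCD}^\epsilon_+|+|\mathrm{LCD}^\epsilon_-|$, splitting by the parities of $n$ and $k$. The parameter $\epsilon$ enters only through $|\mathbb{O}_n^\epsilon(q)|$ in the numerators and through $|\mathbb{O}_{n-k}^{\epsilon}(q)|$, $|\mathbb{O}_{n-k}^{\epsilon\gamma}(q)|$ in the denominators. Each order depends on its discriminant parameter only when the dimension is even, and then only through the sign $\chi((-1)^{m/2}\delta)$. So the argument will track these signs and show that, except in the case ($k$ odd, $n$ even), the $\epsilon$-dependence cancels. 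In those cases we can reduce to $\epsilon=+1$, which is Carlet's computation~\cite[Corollary~32]{carlet2019new}, so the sum equals $L(n,k,q)$.

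\textbf{The three ``otherwise'' cases.}
\begin{enumerate}
\item[(a)] $n$ odd. The numerator is independent of $\epsilon$. Exactly one of $k$, $n-k$ is even.
\begin{itemize}
\item If $n-k$ is even, the two denominators are $|\mathbb{O}_k^1|\,|\mathbb{O}_{n-k}^{\epsilon}|$ and $|\mathbb{O}_k^\gamma|\,|\mathbb{O}_{n-k}^{\epsilon\gamma}|$, with $|\mathbb{O}_k^1|=|\mathbb{O}_k^\gamma|$ since $k$ is odd. Passing from $\epsilon=+1$ to $\epsilon=-1$ multiplies the discriminant by the non-square $\gamma$. This swaps the two values $\{|\mathbb{O}_{n-k}^{1}|,|\mathbb{O}_{n-k}^{\gamma}|\}$, so it merely permutes the two summands and leaves the total unchanged.
\item If $k$ is even, the $n-k$ factors are $\delta$-independent and the $\epsilon$-dependence is absent altogether.
\end{itemize}
\item[(b)] $n$ and $k$ both even. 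Now $n-k$ is even as well. Replacing $\epsilon$ by $\epsilon\gamma$ flips $\chi((-1)^{n/2}\epsilon)$ in the numerator, and this sign must be tracked against the denominators. I would write everything in terms of $s=\chi((-1)^{n/2}\epsilon)$, together with $t_k=\chi((-1)^{k/2})$ and $t_{n-k}$, and sum the two fractions. The factor $(q^{n/2}-s)$ should combine with the two denominator factors $(q^{(n-k)/2}-\cdots)$ so that $s$ disappears. This is the identity Carlet uses, and I expect it to hold verbatim with $\epsilon$ inserted.
\end{enumerate}

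\textbf{The exceptional case ($k$ odd, $n$ even).} Here $n-k$ is odd, so both denominators are $|\mathbb{O}_k^{\cdot}|\,|\mathbb{O}_{n-k}|$ with both factors $\delta$-independent. Hence
\[
L^\epsilon = \frac{2\,|\mathbb{O}_n^\epsilon(q)|}{|\mathbb{O}_k(q)|\,|\mathbb{O}_{n-k}(q)|}.
\]
The factor $(q^{n/2}-\chi((-1)^{n/2}\epsilon))$ survives, so I would simplify the remaining powers of $q$ and the products $\prod(q^{2i}-1)$ into the Gaussian binomial $\left[\begin{smallmatrix} n/2-1 \\ (k-1)/2 \end{smallmatrix}\right]_{q^2}$. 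This is a routine exponent check:
\[
\frac{n(n-2)}{4}-\frac{(k-1)^2}{4}-\frac{(n-k-1)^2}{4} = \frac{k(n-k)-1}{2}.
\]
Setting $\epsilon=1$ then gives Carlet's formula.

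\textbf{Main obstacle.} The main obstacle is case (b), where cancellation of the $\epsilon$-dependence is not term-by-term. I would verify it by putting the two fractions over a common denominator and using $\chi((-1)^{n/2}\epsilon)=t_k\,t_{n-k}\,\chi(\epsilon)$, together with the fact that the two summands carry complementary discriminants. This case also needs care with the convention $|\mathbb{O}_0|=1$ at the boundary.
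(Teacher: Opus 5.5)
Your plan follows the paper's proof. You add the two orbit contributions from \eqref{eq:LB_pm}, split by the parities of $n$ and $k$, and show the $\epsilon$-dependence cancels in three cases, so that $L^\epsilon=L^{+1}=L$ by Carlet. In the remaining case you read off the surviving factor. The differences are minor:
\begin{itemize}
\item For ($n$ odd, $k$ odd), your swap argument replaces the paper's identities $A_m^\epsilon+A_m^{\epsilon\gamma}=2q^{m/2}$ and $A_m^\epsilon A_m^{\epsilon\gamma}=q^m-1$, and is slightly cleaner. The point is that for either type the two summands carry the square classes $1$ and $\gamma$ in the $(n-k)$-block.
\item In the exceptional case you simplify directly to the Gaussian binomial, which is self-contained. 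Your exponent identity is correct, and $\frac{k-1}{2}+\frac{n-k-1}{2}=\frac n2-1$ produces the binomial. The paper instead takes the ratio $A_n^\epsilon/A_n^{+1}$ against Carlet's formula.
\item Case (b), which you only sketch, closes exactly as in the paper. Let $X=q^{k/2}$, $Y=q^{(n-k)/2}$, and let $e=\pm1$ be the square class of the ambient discriminant. The two reciprocals sum to $2(XY+e\,t_kt_{n-k})/((X^2-1)(Y^2-1))$, and multiplying by $XY-e\,t_kt_{n-k}$ removes $e$.
\end{itemize}
No $|\mathbb{O}_0|$ convention is needed, since $1\le k<n$.

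There is, however, one step that fails as written, and the paper's proof shares it. By your own convention in (a), $\epsilon=-1$ means the ambient discriminant is $\gamma$. So the factor that survives in the exceptional case is $q^{n/2}-\chi((-1)^{n/2}\gamma)=q^{n/2}-\epsilon\,\chi((-1)^{n/2})$. It is not $q^{n/2}-\chi((-1)^{n/2}\epsilon)$ with $\epsilon=-1\in\FF_q$ placed inside $\chi$. The two agree only when $\epsilon=+1$ or $q\equiv 3\pmod 4$.

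For $q\equiv 1\pmod 4$ and $\epsilon=-1$ the displayed formula is false. Take $q=5$, $n=2$, $k=1$, $M^{-1}=\mathrm{diag}(1,2)$. The form $x^2+2y^2$ is anisotropic over $\FF_5$, so all $6$ lines are $M^{-1}$-LCD; your orbit count also gives $2\cdot 12/(2\cdot 2)=6$. The formula instead gives $5-\chi(1)=4$. The paper makes the same slip by substituting $\delta=\epsilon$ into $|\mathbb{O}_n^\delta(q)|$. Its closing remark that $L^{-1}=L$ for $q\equiv 1\pmod 4$ fails for the same reason.

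What your argument actually proves is the formula with $\epsilon\,\chi((-1)^{n/2})$ in place of $\chi((-1)^{n/2}\epsilon)$. This coincides with the statement in the only case the paper uses, $\epsilon=\chi(-1)$, so the enumeration theorem is unaffected. You should still state the result in this corrected form rather than assert the literal one.
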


\begin{proof}
We compute $L^\epsilon = |\mathrm{LCD}^\epsilon_+| + |\mathrm{LCD}^\epsilon_-|$ from~\eqref{eq:LB_pm} and compare with $L = L^{+1}$, the case treated by Carlet et al.~\cite[Corollary~32]{carlet2019new}.

When $m$ is even, write $|\mathbb{O}_m^\delta| = P_m \cdot A_m^\delta$, where $P_m := 2 q^{m(m-2)/4} \prod_{i=1}^{m/2 - 1}(q^{2i} - 1)$ is $\delta$-independent and $A_m^\delta := q^{m/2} - \chi((-1)^{m/2}\delta)$. Using $\chi(\gamma) = -1$ and $\chi((-1)^{m/2}\epsilon)^2 = 1$, the pair $\{A_m^\epsilon, A_m^{\epsilon\gamma}\}$ satisfies
\[
A_m^\epsilon + A_m^{\epsilon\gamma} = 2 q^{m/2},
\qquad
A_m^\epsilon \cdot A_m^{\epsilon\gamma} = q^m - 1,
\]
both $\epsilon$-independent. When $m$ is odd, $|\mathbb{O}_m^\delta|$ does not depend on $\delta$. We treat the four parity cases.

\emph{Case $(k\text{ odd}, n\text{ odd})$.} Both $|\mathbb{O}_n^\delta|$ and $|\mathbb{O}_k^\delta|$ are $\delta$-independent, so
\[
L^\epsilon
= \frac{|\mathbb{O}_n|}{|\mathbb{O}_k|}\Bigl[\frac{1}{|\mathbb{O}_{n-k}^\epsilon|} + \frac{1}{|\mathbb{O}_{n-k}^{\epsilon\gamma}|}\Bigr]
= \frac{|\mathbb{O}_n|}{|\mathbb{O}_k|} \cdot \frac{A_{n-k}^\epsilon + A_{n-k}^{\epsilon\gamma}}{P_{n-k}\, A_{n-k}^\epsilon A_{n-k}^{\epsilon\gamma}},
\]
which is $\epsilon$-independent by the relations above.

\emph{Case $(k\text{ even}, n\text{ odd})$.} $|\mathbb{O}_n^\delta|$ and $|\mathbb{O}_{n-k}^\delta|$ are $\delta$-independent; the only $\delta$-dependence is in $|\mathbb{O}_k^\delta|$, which appears as $|\mathbb{O}_k^1|$ in $|\mathrm{LCD}^\epsilon_+|$ and $|\mathbb{O}_k^\gamma|$ in $|\mathrm{LCD}^\epsilon_-|$ irrespective of $\epsilon$. Hence $L^\epsilon$ has no $\epsilon$-dependence.

\emph{Case $(k\text{ even}, n\text{ even})$.} All three orders depend on $\delta$. Set $X = q^{k/2}$, $Y = q^{(n-k)/2}$, $s = \chi((-1)^{k/2})$, $t = \chi((-1)^{(n-k)/2})$, so that $A_n^\epsilon = XY - \chi(\epsilon)\, st$, $A_k^\delta = X - \chi(\delta) s$, $A_{n-k}^\delta = Y - \chi(\delta) t$. Direct computation gives
\[
\begin{aligned}
L^\epsilon
&= \frac{P_n A_n^\epsilon}{P_k P_{n-k}}\Bigl[\frac{1}{A_k^1 A_{n-k}^\epsilon} + \frac{1}{A_k^\gamma A_{n-k}^{\epsilon\gamma}}\Bigr]
= \frac{P_n A_n^\epsilon}{P_k P_{n-k}} \cdot \frac{2(XY + \chi(\epsilon)\, st)}{(X^2 - 1)(Y^2 - 1)} \\
&= \frac{P_n}{P_k P_{n-k}} \cdot \frac{2((XY)^2 - 1)}{(X^2 - 1)(Y^2 - 1)},
\end{aligned}
\]
using $\chi(\epsilon)^2 = 1$ and $s^2 = t^2 = 1$. Hence $L^\epsilon$ is $\epsilon$-independent.

\emph{Case $(k\text{ odd}, n\text{ even})$.} $|\mathbb{O}_k^\delta|$ and $|\mathbb{O}_{n-k}^\delta|$ are both $\delta$-independent, so $|\mathrm{LCD}^\epsilon_+| = |\mathrm{LCD}^\epsilon_-|$ and
\[
L^\epsilon(n, k, q) = \frac{2|\mathbb{O}_n^\epsilon|}{|\mathbb{O}_k| \cdot |\mathbb{O}_{n-k}|}.
\]
The ratio $L^\epsilon/L = A_n^\epsilon/A_n^{+1} = (q^{n/2} - \chi((-1)^{n/2}\epsilon))/(q^{n/2} - \chi((-1)^{n/2}))$ substituted into Carlet's formula (Appendix~\ref{app:mass_odd}) gives the stated expression.
\end{proof}

In particular, $L^{-1} = L$ when $\chi(-1) = +1$ (i.e., $q \equiv 1 \pmod 4$, where $\chi((-1)^{n/2}\cdot(-1)) = \chi((-1)^{n/2})$); when $\chi(-1) = -1$, the $\chi$-correction flips sign relative to $L$ in the $(k\text{ odd}, n\text{ even})$ case.

\bibliographystyle{plain}
\bibliography{main}

\end{document}